\pdfoutput=1
\documentclass[a4paper,aps,pra,reprint,floatfix,superscriptaddress,showpacs]{revtex4-1}
\usepackage{amssymb}
\usepackage{amsmath}
\usepackage{amstext}
\usepackage{amsthm}
\usepackage{bbm}
\usepackage[percent]{overpic}
\usepackage{microtype}
\setlength{\textheight}{23.83cm}

\newcommand{\aaa}{\text{a}}
\newcommand{\bbb}{\text{b}}
\newcommand{\co}{\text{c}}
\newcommand{\si}{\text{s}}
\newcommand{\tC}{\text{C}}
\newcommand{\tS}{\text{S}}
\newcommand{\one}{\mathbbm{1}}

\newcommand{\D}{\text{d}}
\newcommand{\E}{\text{e}}
\newcommand{\I}{\text{i}}
\newcommand{\Trace}{\text{Tr}}
\newcommand{\Order}{\mathcal{O}}
\newcommand{\etal}{\textit{et al. }}
\providecommand{\tfra}[2]{\tfrac{#1}{#2}}
\providecommand{\eq}[1]{Equation \eqref{#1}}
\providecommand{\eqsimple}[1]{\eqref{#1}}
\providecommand{\Eq}[1]{Equation \eqref{#1}}
\providecommand{\fig}[1]{Figure \ref{#1}}
\providecommand{\Fig}[1]{Figure \ref{#1}}
\providecommand{\sect}[1]{Section \ref{#1}}

\providecommand{\app}[1]{Appendix \ref{#1}}

\newtheorem{theo}{Theorem}
\newtheorem{lem}{Lemma}
\newtheorem{cor}{Corollary}

\begin{document}

\title{The maximally entangled symmetric state in terms of the geometric measure}

\author{Martin Aulbach}
\email{pyma@leeds.ac.uk}
\affiliation{Department of Physics, Graduate School of Science, The
  University of Tokyo, Tokyo 113-0033, Japan}
\affiliation{The School of Physics and Astronomy, University of Leeds,
  Leeds LS2 9JT, UK}
\affiliation{Department of Physics, University of Oxford, Clarendon
  Laboratory, Oxford OX1 3PU, UK}

\author{Damian Markham}
\email{markham@telecom-paristech.fr}
\affiliation{CNRS, LTCI, Telecom ParisTech, 37/39 rue Dareau, 75014
  Paris, France}
\affiliation{Department of Physics, Graduate School of Science, The
  University of Tokyo, Tokyo 113-0033, Japan}

\author{Mio Murao}
\email{murao@phys.s.u-tokyo.ac.jp}
\affiliation{Department of Physics, Graduate School of Science, The
  University of Tokyo, Tokyo 113-0033, Japan}
\affiliation{Institute for Nano Quantum Information Electronics, The
  University of Tokyo, Tokyo 113-0033, Japan}

\begin{abstract}
  The geometric measure of entanglement is investigated for
  permutation symmetric pure states of multipartite qubit systems, in
  particular the question of maximum entanglement.  This is done with
  the help of the Majorana representation, which maps an $n$ qubit
  symmetric state to $n$ points on the unit sphere.  It is shown how
  symmetries of the point distribution can be exploited to simplify
  the calculation of entanglement and also help find the maximally
  entangled symmetric state. Using a combination of analytical and
  numerical results, the most entangled symmetric states for up to 12
  qubits are explored and discussed. The optimization problem on the
  sphere presented here is then compared with two classical
  optimization problems on the $S^2$ sphere, namely T\'{o}th's problem
  and Thomson's problem, and it is observed that, in general, they are
  different problems.
\end{abstract}
\pacs{03.67.Mn, 02.60.Pn, 03.65.Ta, 03.65.Ud, 03.67.Lx}
\maketitle

\section{Introduction}\label{introduction}

Being the fundamental resource in a wide range of situations in
quantum information processing, entanglement is considered as a
`standard currency' for quantum information tasks, and it is highly
desirable to know which states of a given system exhibit a high or
maximal amount of entanglement \cite{Horodecki09}. When it comes to
multipartite states this question becomes complicated. There are
different \emph{types} of entanglement \cite{Dur00}, alongside which
there are many different ways to quantify entanglement, each of which
may capture a different desirable quality of a state as a resource.

In this work, the geometric measure of entanglement, a distance-like
entanglement measure \cite{Shimony95,Wei04}, will be investigated to
analyze maximally entangled multipartite states. There are several
incentives to consider this particular measure.  Firstly, it has a
broad range of operational interpretations: for example, in local
state discrimination \cite{Hayashi06}, additivity of channel
capacities \cite{Werner02} and recently for the classification of
states as resources for measurement-based quantum computation
(MBQC)\cite{Gross09,Nest07,Mora10}.  Another advantage of the
geometric measure is that, while other known entanglement measures are
notoriously difficult to compute from their variational definitions,
the definition of the geometric measure allows for a comparatively
easy calculation. Furthermore, the geometric measure can be linked to
other distance-like entanglement measures, such as the robustness of
entanglement and the relative entropy of entanglement
\cite{Wei04,Hayashi08,Cavalcanti06}.  The function also has
applications in signal processing, particularly in the fields of
multi-way data analysis, high order statistics and independent
component analysis (ICA), where it is known under the name \emph{rank
  one approximation to high order tensors}
\cite{Lathauwer00,Zhang01,Kofidis02,Wang09,Ni07,Silva08}.

We focus our attention on permutation-symmetric states -- that is,
states that are invariant when swapping any pair of particles. This
class of states has been useful for different quantum information
tasks (for example, in leader election \cite{Dhondt06}).  It includes
the Greenberger-Horne-Zeilinger (GHZ) states \cite{Greenberger90}, W
states and Dicke states \cite{Dicke54}, and also occurs in a variety
of situations in many-body physics. There has been lots of activity
recently in implementing these states experimentally
\cite{Prevedel09,Wieczorek09}.  Furthermore, the symmetric properties
make them amenable to the analysis of entanglement properties
\cite{Hayashi08,Markham10,Toth09,Hubener09,Bastin09,Mathonet10}.

An important tool in this work will be the Majorana representation
\cite{Majorana32}, a generalization of the Bloch sphere representation
of single qubits, where a permutation-sym\-me\-tric state of $n$
qubits is unambiguously mapped to $n$ points on the surface of the
unit sphere.  Recently, the Majorana representation has proved very
useful in analyzing entanglement properties of symmetric states
\cite{Bastin09,Mathonet10,Markham10}. In particular, the geometric
measure of entanglement has a natural interpretation, and the Majorana
representation facilitates exploitation of further symmetries to
characterize entanglement \cite{Markham10}.  For example, the
two-qubit symmetric Bell state $| \psi^{+} \rangle = 1 / \sqrt{2}
\left( |01\rangle + |10\rangle \right)$ is represented by an antipodal
pair of points: the north pole $|0\rangle$ and the south pole
$|1\rangle$. Roughly speaking, symmetric states with a high degree of
entanglement are represented by point distributions that are well
spread out over the sphere.  We will use this idea along with other
symmetry arguments to look for the most entangled states. Along the
way we will compare this problem to other optimization problems of
point distributions on the sphere.

The paper is organized as follows. In \sect{geometric_measure}, the
definition and properties of the geometric measure of entanglement are
briefly recapitulated, which is followed by an introduction and
discussion of symmetric states in \sect{positive_and_symm}.  In
\sect{majorana_representation}, the Majorana representation of
symmetric states is introduced.  The problem of finding the maximally
entangled state is phrased in this manner, and is compared to two
other point distribution problems on $S^2$: T\'{o}th's problem and
Thomson's problem.  In \sect{analytic}, some theoretical results for
symmetric states are derived with the help of the intuitive idea of
the Majorana representation.  The numerically determined maximally
entangled symmetric states of up to 12 qubits are presented in
\sect{maximally_entangled_symmetric_states}.  Our results are
discussed in \sect{discussion}, and \sect{conclusion} contains the
conclusion.

\section{The geometric measure of entanglement}
\label{geometric_measure}

The geometric measure of entanglement is a distance-like entanglement
measure for pure multipartite states that assesses the entanglement of
a state in terms of its remoteness from the set of separable states
\cite{Vedral98}.  It is defined as the maximal overlap of a given pure
state with all pure product states \cite{Shimony95,Wei03,Barnum01} and
is also defined as the geodesic distance with respect to the
Fubini-Study metric \cite{Brody01}.  Here we present it in the inverse
logarithmic form of the maximal overlap, which is more convenient in
relation to other entanglement measures:
\begin{equation}\label{geo_1}
  E_{\text{G}}(| \psi \rangle ) = \min_{| \lambda \rangle \in
    \mathcal{H}_{\text{SEP}} } \log_2 \left(
    \frac{1}{ \vert \langle \lambda | \psi \rangle \vert^2 }
  \right) \enspace .
\end{equation}
$E_{\text{G}}$ is non-negative and zero iff $| \psi \rangle$ is a
product state.  We denote a product state closest to $| \psi \rangle$
by $| \Lambda_{\psi} \rangle \in \mathcal{H}_{\text{SEP}}$, and it
should be noted that a given $| \psi \rangle$ can have more than one
closest product state.  Indeed, we will usually deal with entangled
states that have several closest product states.  Due to its
compactness, the normalized, pure Hilbert space of a
finite-dimensional system (e.g. $n$ qudits) always contains at least
one state $| \Psi \rangle$ with maximal entanglement, and to each such
state relates at least one closest product state.  The task of
determining maximal entanglement can be formulated as a
max-min problem, with the two extrema not necessarily being
unambiguous:
\begin{equation}\label{geo_meas}
  \begin{split}
    E_{\text{G}}^{\text{max}} & = \max_{| \psi \rangle \in
      \mathcal{H}} \min_{| \lambda \rangle \in
      \mathcal{H}_{\text{SEP}} } \log_2 \left( \frac{1}{ \vert
        \langle \lambda | \psi \rangle \vert^2 } \right) \enspace , \\
    & = \max_{| \psi \rangle \in \mathcal{H}} \log_2 \left(
      \frac{1}{ \vert \langle \Lambda_{\psi} | \psi \rangle \vert^2 }
    \right) \enspace , \\
    & = \log_2 \left( \frac{1}{ \vert \langle \Lambda_{\Psi} | \Psi
        \rangle \vert^2 } \right) \enspace .
  \end{split}
\end{equation}
It is often more convenient to define $G(| \psi \rangle ) = \max_{|
  \lambda \rangle } \vert \langle \lambda | \psi \rangle \vert \, ,$
so that we obtain $E_{\text{G}} = \log_2 ( 1/ {G}^2 )$.  Because of
the monotonicity of this relationship, the task of finding the
maximally entangled state is equivalent to solving the min-max problem
\begin{equation}\label{minmax}
  \min_{| \psi \rangle \in \mathcal{H}}
  G(| \psi \rangle ) =
  \min_{| \psi \rangle \in \mathcal{H}}
  \max_{| \lambda \rangle \in \mathcal{H}_{\text{SEP}} }
  \vert \langle \lambda | \psi \rangle \vert \enspace .
\end{equation}
As mentioned in the Introduction, there are several advantages of this
measure of entanglement. First of all, it has several operational
interpretations. It has implications for channel capacity
\cite{Werner02} and can also be used to give conditions as to when
states are useful resources for MBQC \cite{Gross09,Nest07,Mora10}. If
the entanglement of a set of resource states scales anything below
logarithmically with the number of parties, it cannot be an efficient
resource for deterministic universal MBQC \cite{Nest07}. On the other
hand, somewhat surprisingly, if the entanglement is too large, it is
also not a good resource for MBQC. If the geometric measure of
entanglement of an $n$ qubit system scales larger than $n - \delta$
(where $\delta$ is some constant), then such a computation can be
simulated efficiently computationally \cite{Gross09}. Of course, we
should also note that there are many other quantum information tasks
that are not restricted by such requirements. For example, the
$n$-qubit GHZ state can be considered the most non-local with respect
to all possible two-output, two-setting Bell inequalities
\cite{Werner01}, whereas the geometric measure is only $E_{\text{G}}
(|\text{GHZ}\rangle)=1$, independent of $n$. The $n$-qubit W state, on
the other hand, is the optimal state for leader election
\cite{Dhondt06} with entanglement $E_{\text{G}} (| \text{W} \rangle )
= \log_2 (n/(n-1))^{n-1}$.  Indeed, for local state discrimination,
the role of entanglement in blocking the ability to access information
locally is strictly monotonic -- the higher the geometric measure of
entanglement, the harder it is to access information locally
\cite{Hayashi06}.

In addition, the geometric measure $E_{\text{G}}$ has close links to
other distance-like entanglement measures, namely the (global)
robustness of entanglement $R$ \cite{Vidal99} and the relative entropy
of entanglement $E_{\text{R}}$ \cite{Vedral98}.  Between these
measures the inequalities $E_{\text{G}} \leq E_{\text{R}} \leq \log_2
(1 + R)$ hold for all states \cite{Wei04,Hayashi08,Cavalcanti06}, and
they turn into equalities for stabilizer states (e.g. GHZ state),
Dicke states (e.g. W state) and permutation-antisymmetric basis states
\cite{Hayashi06,Hayashi08,Markham07}.

An upper bound for the entanglement of pure $n$ qubit states is given
in \cite{Jung08} as
\begin{equation}
  E_{\text{G}} ( | \psi \rangle ) \leq n-1 \enspace .
\end{equation}
We can see that this allows for states to be more entangled than is
useful, e.g. for MBQC. Indeed, although no states of more than two
qubits reach this bound \cite{Jung08}, most states of $n$ qubits have
entanglement $E_{\text{G}} > n - 2 \log_2 (n) - 3$ \cite{Gross09}.  In
the next section, we will see that symmetric states have generally
lower entanglement.

We can also make a general statement for positive states that will
help us in calculating entanglement for this smaller class of
states. For finite-dimensional systems, a general quantum state can be
written in the form $| \psi \rangle = \sum_i a_i | i \rangle$ with an
orthonormalized basis $\{ | i \rangle \}$ and complex coefficients
$a_i \in \mathbb{C}$.  We will call $| \psi \rangle$ a \emph{real
  state} if -- for a given basis $\{ | i \rangle \}$ -- all
coefficients are real ($a_i \in \mathbb{R}$), and likewise call $|
\psi \rangle$ a \emph{positive state} if the coefficients are all
positive ($a_i \geq 0$). A \emph{computational basis} is one made up
of tensors of local bases.

\begin{lem}\label{lem_positive}
  Every state $| \psi \rangle$ of a finite-dimensional system that is
  positive with respect to some computational basis has at least one
  positive closest product state $| \Lambda_{\psi} \rangle$.
\end{lem}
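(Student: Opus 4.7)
The plan is to take any closest product state and show that replacing each local factor by its entry-wise modulus yields another closest product state that is positive in the given computational basis. Existence of some closest product state is guaranteed by compactness, as already noted before the lemma, so the task reduces to a symmetrization argument.

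Concretely, fix a computational basis $\{|i\rangle\} = \{|j_1\rangle\otimes\cdots\otimes|j_n\rangle\}$ in which $|\psi\rangle=\sum_i a_i|i\rangle$ with $a_i\geq 0$, and let $|\Lambda\rangle=\bigotimes_{k=1}^n|\lambda_k\rangle$ be any closest product state, with local expansion $|\lambda_k\rangle=\sum_{j} c_{k,j}|j\rangle_k$. First, I would define the modified local states $|\lambda_k'\rangle := \sum_{j}|c_{k,j}|\,|j\rangle_k$, which are still unit vectors since only the phases of the coefficients have been changed, and the product $|\Lambda'\rangle:=\bigotimes_k|\lambda_k'\rangle$ is therefore a legitimate product state that is manifestly positive in the computational basis.

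Next I would compare overlaps. Writing each multi-index $i$ as $(j_1,\ldots,j_n)$, one has
\begin{equation}
  \langle\Lambda|\psi\rangle = \sum_i \Big(\prod_{k=1}^n \bar c_{k,j_k}\Big)\, a_i,
  \qquad
  \langle\Lambda'|\psi\rangle = \sum_i \Big(\prod_{k=1}^n |c_{k,j_k}|\Big)\, a_i.
\end{equation}
Because the $a_i$ are non-negative, the triangle inequality gives
\begin{equation}
  |\langle\Lambda|\psi\rangle| \;\leq\; \sum_i \Big(\prod_{k=1}^n |c_{k,j_k}|\Big)\, a_i \;=\; \langle\Lambda'|\psi\rangle,
\end{equation}
so $|\Lambda'\rangle$ achieves at least as large an overlap with $|\psi\rangle$ as $|\Lambda\rangle$. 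Since $|\Lambda\rangle$ was already optimal, equality must hold, and $|\Lambda'\rangle$ is therefore also a closest product state; by construction its coefficients in the computational basis are all non-negative, proving the claim.

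There is essentially no hard step here: the only thing one must be careful about is that replacing local amplitudes by their moduli is performed \emph{factor by factor}, so that the product structure is preserved; once this is observed, the triangle inequality together with the non-negativity of the $a_i$ finishes the argument, and optimality of the original $|\Lambda\rangle$ automatically upgrades the inequality into the desired equality.
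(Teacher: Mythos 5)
Your proposal is correct and follows essentially the same route as the paper's own proof: replace each local factor of a closest product state by its entry-wise modulus, then use the triangle inequality together with the non-negativity of the $a_i$ to conclude that the resulting positive product state achieves at least as large an overlap and is therefore also optimal. No substantive difference from the published argument.
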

\begin{proof}
  Picking any computational basis in which the coefficients of $| \psi
  \rangle$ are all positive, we denote the basis of subsystem $j$ with
  $\{ | i_{j} \rangle \}$, and can write the state as $| \psi \rangle
  = \sum_{\vec{i}} a_{\vec{i}} \, | i_1 \rangle \cdots | i_n \rangle$,
  with $\vec{i} = ( i_1 , \dots , i_n)$ and $a_{\vec{i}} \geq 0$.  A
  closest product state of $| \psi \rangle$ can be written as $|
  \Lambda_{\psi} \rangle = \bigotimes_{j} | \sigma_{j} \rangle$, where
  $| \sigma_{j} \rangle = \sum_{i_j} b^{j}_{i_j} | i_j \rangle$ (with
  $b^{j}_{i_j} \in \mathbb{C}$) is the state of subsystem $j$. Now
  define a new product state with positive coefficients as $|
  \Lambda_{\psi} ' \rangle = \bigotimes_{j} | \sigma_{j} ' \rangle$,
  where $| \sigma_{j} ' \rangle = \sum_{i_j} | b^{j}_{i_j} | \, | i_j
  \rangle$.  Because of $|\langle \psi | \Lambda_{\psi} ' \rangle| =
  \sum_{\vec{i}} a_{\vec{i}} \prod_{j} | b^{j}_{i_j} | \geq \left|
    \sum_{\vec{i}} a_{\vec{i}} \prod_{j} b^{j}_{i_j} \right| =
  |\langle \psi | \Lambda_{\psi} \rangle|$, the positive state $|
  \Lambda_{\psi} ' \rangle$ is a closest product state of $| \psi
  \rangle$.
\end{proof}

This lemma, which was also shown in \cite{Zhu10}, asserts that
positive states have at least one positive closest product state, but
there can nevertheless exist other non-positive closest product
states.  A statement analogous to Lemma \ref{lem_positive} does not
hold for real states, and it is easy to find examples of real states
that have no real closest product state.

From now on we will simply denote entanglement instead of referring to
the geometric measure of entanglement. It must be kept in mind,
however, that the maximally entangled state of a multipartite system
subtly depends on the chosen entanglement measure \cite{Plenio07}.

\section{Permutation symmetric states}\label{positive_and_symm}

In general it is very difficult to find the closest product state of a
given quantum state, due to the large amount of parameters in $|
\Lambda \rangle$. The problem will be considerably simplified,
however, when considering permutation-symmetric states.  In
experiments with many qubits, it is often not possible to access
single qubits individually, necessitating a fully symmetrical
treatment of the initial state and the system dynamics \cite{Toth07}.
The ground state of the Lipkin-Meshkov-Glick model was found to be
permutation-invariant, and its entanglement was quantified in term of
the geometric measure and its distance-related cousins \cite{Orus08}.
For these reasons it is worth analyzing various theoretical and
experimental aspects of the entanglement of symmetric states, such as
entanglement witnesses or experimental setups
\cite{Korbicz05,Korbicz06}.

The symmetric basis states of a system of $n$ qubits are given by the
Dicke states, the simultaneous eigenstates of the total angular
momentum $J$ and its $z$-component $J_z$
\cite{Dicke54,Stockton03,Toth07}.  They are mathematically expressed
as the sum of all permutations of computational basis states with
$n-k$ qubits being $|0\rangle$ and $k$ being $|1\rangle$.
\begin{equation}\label{dicke_def}
  | S_{n,k} \rangle = {\binom{n}{k}}^{- 1/2} \sum_{\text{perm}} \;
  \underbrace{ | 0 \rangle | 0 \rangle \cdots | 0 \rangle }_{n-k}
  \underbrace{ | 1 \rangle | 1 \rangle \cdots | 1 \rangle }_{k}
  \enspace ,
\end{equation}
with $0 \leq k \leq n$, and where we omitted the tensor symbols that
mediate between the $n$ single qubit spaces.  The Dicke states
constitute an orthonormalized set of basis vectors for the symmetric
Hilbert space $\mathcal{H}_{\text{s}}$.  The notation $| S_{n,k}
\rangle$ will sometimes be abbreviated as $| S_{k} \rangle$ when the
number of qubits is clear.

Recently, there has been a very active investigation into the
conjecture that the closest product state of a symmetric state is
symmetric itself \cite{Wei04,Hayashi08,Hubener09}. A proof of this
seemingly straightforward statement is far from trivial, and after
some special cases were proofed \cite{Hayashi09,Wei10}, H\"{u}bener
\etal \cite{Hubener09} were able to extend this result to the general
case.  They also showed that, for $n \geq 3$ qudits (general quantum
$d$-level systems) the closest product state of a symmetric state is
\emph{necessarily} symmetric.  This result greatly reduces the
complexity of finding the closest product state and thus the
entanglement of a symmetric state.

A general pure symmetric state of $n$ qubits is a linear combination
of the $n+1$ symmetric basis states, with the operational nature being
that the state remains invariant under the permutation of any two of
its subsystems.

A closest product state of $| S_{n,k} \rangle$ is \cite{Hayashi08}
\begin{equation}\label{dicke_cs}
  | \Lambda \rangle = \Big( \sqrt{ \tfra{n-k}{n} } \, |0\rangle +
  \sqrt{ \tfra{k}{n} } \, |1\rangle \Big)^{\otimes n} \enspace ,
\end{equation}
i.e. a tensor product of $n$ identical single qubit states.  From
this, the amount of entanglement is found to be
\begin{equation}\label{dicke_ent}
  E_{\text{G}} ( | S_{n,k} \rangle ) = \log_2 \left(
    \frac{ \big( \frac{n}{k} \big)^k \big( \frac{n}{n-k}
      \big)^{n-k}} {\binom{n}{k}} \right) \enspace .
\end{equation}
This formula straightforwardly gives the maximally entangled Dicke
state. For even $n$ it is $| S_{n,n/2} \rangle$ and for odd $n$ the
two equivalent states $| S_{n,(n+1)/2} \rangle$ and $| S_{n,(n-1)/2}
\rangle$. In general, however, the maximally entangled symmetric state
of $n$ qubits is a superposition of Dicke states. Nevertheless,
\eq{dicke_ent} can be used as a lower bound to the maximal
entanglement of symmetric states.  This bound can be approximated by
the Stirling formula for large $n$ as $E_{\text{G}} \geq \log_2
\sqrt{n \pi/2}$.

An upper bound to the geometric measure for symmetric $n$ qubit states
can be easily found from the well-known decomposition of the identity
on the symmetric subspace (denoted $\one_{\text{Symm}}$, see
e.g. \cite{Renner}),
\begin{equation}
  \int_{\mathcal{S}(\mathcal{H})}
  (|\theta\rangle\langle\theta|)^{\otimes n}\omega(\theta)
  = \frac{1}{n+1}\one_{\text{Symm}} \enspace ,
\end{equation}
where $\omega$ denotes the uniform probability measure over the unit
sphere $\mathcal{S}(\mathcal{H})$ on Hilbert space $\mathcal{H}$.  We
can easily see that $G ( | \psi \rangle )^2 = \max_{\omega \in
  \mathcal{H}_{\text{SEP}}} \Trace (\omega |\psi\rangle\langle
\psi|)\geq 1 / (n+1)$. Hence, for any symmetric state of $n$ qubits,
the geometric measure of entanglement is upper bounded by
\begin{equation}
  E_{\text{G}} (|\psi\rangle_{\text{s}}) \leq \log_2 (n+1) \enspace .
\end{equation}
An alternative proof that has the benefit of being visually accessible
is presented in \app{normalization_bloch}.

The maximal symmetric entanglement for $n$ qubits thus scales
polylogarithmically between $\Order (\log \sqrt{n})$ and $\Order (\log
n)$.  To compare this with the general non-sym\-me\-tric case,
consider the lower bound of the maximal $n$ qubit entanglement ($n$
even) of $E_{\text{G}} \geq (n/2)$ \footnote{A trivial example of an
  $n$ qubit state with $E_{\text{G}} = (n/2)$ are $(n/2)$ bipartite
  Bell states, each of which contributes 1 ebit. Another example is
  the 2D cluster state of $n$ qubits which has $E_{\text{G}} = (n/2)$
  \protect{\cite{Markham07}}.}.  Thus the maximal entanglement of
general states scales much faster, namely linearly rather than
logarithmically. As mentioned, for most states the entanglement is
even higher and thus too entangled to be useful for MBQC. While the
bounds for symmetric states mean that permutation-symmetric states are
never too entangled to be useful for MBQC, unfortunately their scaling
is also too low to be good universal deterministic resources
\cite{Nest07}. They may nevertheless be candidates for approximate,
stochastic MBQC \cite{Mora10}. Regardless of their use as resources
for MBQC, the comparatively high entanglement of symmetric states
still renders them formidable candidates for specific quantum
computations or as resources for other tasks, such as the leader
election problem \cite{Dhondt06} and LOCC discrimination
\cite{Hayashi06}.

We end this section by mentioning a simplification with respect to
symmetric positive states.  States that are symmetric as well as
positive in some computational basis are labelled as \emph{positive
  symmetric}. From the previous discussion it is clear that such
states have a closest product state which is positive symmetric
itself, a result first shown in \cite{Hayashi08}.  It should be noted
that, while each closest product state of a positive symmetric state
is \emph{necessarily} symmetric for $n \geq 3$ qudits, it \emph{need
  not} be positive. We can formulate this as a statement akin to Lemma
\ref{lem_positive}.

\begin{lem}\label{lem_positive_symmetric}
  Every symmetric state $| \psi \rangle_{\text{s}}$ of $n$ qudits,
  which is positive in some computational basis, has at least one
  positive symmetric closest product state $| \Lambda_{\psi}
  \rangle_{\text{s}}$.
\end{lem}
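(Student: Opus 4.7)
The plan is to combine two ingredients already in place: the H\"{u}bener \textit{et al.} result, which guarantees that any symmetric state admits at least one closest product state of the pure-tensor-power form $|\sigma\rangle^{\otimes n}$, and the absolute-value trick used in the proof of Lemma \ref{lem_positive}. The idea is to apply that trick at the level of the single-qudit factor $|\sigma\rangle$, so that the improved candidate automatically remains a tensor power and therefore symmetric.

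Concretely, first I would fix a computational basis $\{|i\rangle\}$ in which $|\psi\rangle_{\text{s}} = \sum_{\vec i} a_{\vec i}\, |i_1\rangle \cdots |i_n\rangle$ has nonnegative coefficients $a_{\vec i} \geq 0$. By the symmetry-preservation result just cited, I may pick a closest product state of the form $|\Lambda_\psi\rangle_{\text{s}} = |\sigma\rangle^{\otimes n}$ with $|\sigma\rangle = \sum_i b_i |i\rangle$, $b_i \in \mathbb{C}$. I then define
\begin{equation*}
  |\sigma'\rangle := \sum_i |b_i| \, |i\rangle, \qquad |\Lambda'_\psi\rangle_{\text{s}} := |\sigma'\rangle^{\otimes n} \enspace ,
\end{equation*}
which is automatically normalized (the norm depends only on $|b_i|^2$), is manifestly a pure tensor power, and has nonnegative coefficients in the computational basis. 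What remains is to check that $|\Lambda'_\psi\rangle_{\text{s}}$ is still a closest product state, which follows from the same triangle-inequality estimate used in Lemma \ref{lem_positive}:
\begin{equation*}
  |\langle \psi | \Lambda'_\psi \rangle| = \sum_{\vec i} a_{\vec i} \prod_j |b_{i_j}| \geq \left| \sum_{\vec i} a_{\vec i} \prod_j b_{i_j} \right| = |\langle \psi | \Lambda_\psi \rangle| \enspace .
\end{equation*}

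The only step that warrants any reflection is the preservation of symmetry. The construction used in Lemma \ref{lem_positive} allows \textit{a priori} different absolute-value replacements in different subsystems, so the product state $\bigotimes_j |\sigma'_j\rangle$ it produces need not be symmetric even when applied to a symmetric input. What salvages the present statement is that the H\"{u}bener \textit{et al.} result lets us start from a genuine tensor power $|\sigma\rangle^{\otimes n}$: the $n$ single-qudit factors coincide, the absolute-value map is applied identically to each of them, and the tensor-power form -- hence full permutation symmetry -- is preserved automatically, with no resymmetrization step required. This is precisely what makes the argument a one-line specialization of Lemma \ref{lem_positive} rather than an independent undertaking.
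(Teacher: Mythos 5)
Your argument is correct and is exactly the reasoning the paper intends: it states Lemma \ref{lem_positive_symmetric} without a displayed proof, remarking only that it is ``clear from the previous discussion,'' i.e.\ from combining the absolute-value trick of Lemma \ref{lem_positive} with the H\"{u}bener \emph{et al.} guarantee of a symmetric closest product state $|\sigma\rangle^{\otimes n}$. You have simply made that implicit combination explicit, including the one point that actually needs care (applying the absolute-value map to the single factor $|\sigma\rangle$ so that the tensor-power form, and hence the symmetry, is preserved).
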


\section{Majorana representation of symmetric states}
\label{majorana_representation}

With the discussion of the geometric measure and symmetric states
behind us, we have gathered the prerequisites to introduce a crucial
tool, the Majorana representation. It will help us to understand the
amount of entanglement of symmetric states.

\subsection{Definition}\label{majorana_definition_sect}

In classical physics, the angular momentum $\mathbf{J}$ of a system
can be represented by a point on the surface of the 3D unit sphere
$S^2$, which corresponds to the direction of $\mathbf{J}$.  No such
simple representation is possible in quantum mechanics, but Majorana
\cite{Majorana32} pointed out that a pure state of spin-$j$ can be
uniquely represented by $2j$ not necessarily distinct points on $S^2$.
This is a generalization of the spin-$1/2$ (qubit) case, where the 2D
Hilbert space is isomorphic to the unit vectors on the Bloch sphere.

An equivalent representation also exists for per\-mu\-tation-symmetric
states of $n$ spin-$1/2$ particles \cite{Majorana32,Bacry74}.  By
means of this `Majorana representation' any symmetric state of $n$
qubits $| \psi \rangle_{\text{s}}$ can be uniquely composed from a sum
over all permutations $P : \{ 1, \dots , n \} \rightarrow \{ 1, \dots,
n \}$ of $n$ undistinguishable single qubit states $\{ | \phi_1
\rangle, \dots , | \phi_n \rangle \}$:
\begin{align}
  | \psi \rangle_{\text{s}} = {} & \frac{1}{\sqrt{K}}
  \sum_{ \text{perm} } | \phi_{P(1)} \rangle | \phi_{P(2)} \rangle
  \cdots | \phi_{P(n)} \rangle \enspace , \label{majorana_definition} \\
  & \text{with} \quad | \phi_i \rangle = \cos \tfra{\theta_i}{2} \,
  |0\rangle + \E^{\I \varphi_i} \sin
  \tfra{\theta_i}{2} |1\rangle \enspace , \nonumber \\
  & \text{and} \quad \; K = n! \sum_{\text{perm}} \, \prod_{i = 1}^{n}
  \, \langle \phi_{i} | \phi_{ P(i) } \rangle \enspace . \nonumber
\end{align}
The normalization factor $K$ is in general different for different $|
\psi \rangle_\text{s}$.  By means of \eq{majorana_definition}, the
multi-qubit state $| \psi \rangle_\text{s}$ can be visualized by $n$
unordered points (each of which has a Bloch vector pointing in its
direction) on the surface of a sphere.  We call these points the
\emph{Majorana points} (MP), and the sphere on which they lie the
\emph{Majorana sphere}.

With \eq{majorana_definition}, the form of a symmetric state $| \psi
\rangle_{\text{s}}$ can be explicitly determined if the MPs are
known. If the MPs of a given state $| \psi \rangle_{\text{s}} =
\sum^{n}_{k=0} a_k | S_k \rangle$ are unknown, they can be determined
by solving a system of $n+1$ equations.
\begin{gather}
  a_k = {\binom{n}{k}}^{1/2} \sum_{ \text{perm} }
  \tS_{P(1)} \cdots \tS_{P(k)} \tC_{P(k+1)} \cdots \tC_{P(n)} \: ,
  \label{state_to_mp} \\
  \text{with} \quad
  \tC_{i} = \cos \tfra{\theta_i}{2} \enspace ,
  \qquad \tS_{i} = \E^{\I \varphi_{i}} \sin
  \tfra{\theta_i}{2} \enspace . \nonumber
\end{gather}

The Majorana representation has been rediscovered several times, and
has been put to many different uses across physics. In relation to the
foundations of quantum mechanics, it has been used to find efficient
proofs of the Kochen-Specker theorem \cite{Zimba93,PenroseRindler} and
to study the `quantumness' of pure quantum states in several respects
\cite{Zimba06,Giraud10}, as well as the approach to classicality in
terms of the discriminability of states \cite{Markham03}. It has also
been used to study Berry phases in high spin \cite{Hannay96} and
quantum chaos \cite{Hannay98,Leboeuf91}. Within many-body physics it
has been used for finding solutions to the Lipkin-Meshkov-Glick model
\cite{Ribiero08}, and for studying and identifying phases in spinor
BEC \cite{Barnett06,Barnett07,Barnett08,Makela07}.  It has also been
used to look for optimal resources for reference frame alignment
\cite{Kolenderski08} and for phase estimation \cite{Kolenderski09}.

Recently, the Majorana representation has also become a useful tool in
studying the entanglement of permutation-symmetric states. It has been
used to search for and characterize different classes of entanglement
\cite{Bastin09,Mathonet10,Markham10}, which have interesting mirrors
in the classification of phases in spinor condensates
\cite{Markham10,Barnett07}. Of particular interest, in this work, is
that it gives a natural visual interpretation of the geometric measure
of entanglement \cite{Markham10}, and we will see how symmetries in
the point distributions can be used to calculate the entanglement and
assist in finding the most entangled states.

The connection to entanglement can first be noticed by the fact that
the point distribution is invariant under local unitary maps.
Applying an arbitrary single-qubit unitary operation $U$ to each of
the $n$ subsystems yields the LU map
\begin{equation}\label{m_def_1}
  | \psi \rangle_{\text{s}} \; \longmapsto \;
  | \varphi \rangle_{\text{s}} \equiv U \otimes \cdots \otimes
  U \, | \psi \rangle_{\text{s}} \enspace ,
\end{equation}
and from \eq{majorana_definition} it follows that
\begin{align}\label{lusphere}
    | \varphi \rangle_{\text{s}} = {} & \frac{1}{\sqrt{K}} \sum_{
      \text{perm} } | \vartheta_{P(1)} \rangle | \vartheta_{P(2)}
    \rangle \cdots | \vartheta_{P(n)} \rangle \enspace , \\
   & \text{with} \quad | \vartheta_i \rangle = U | \phi_i \rangle \;
    \forall i \enspace . \nonumber
\end{align}
In other words, the symmetric state $| \psi \rangle_{\text{s}}$ is
mapped to another symmetric state $| \varphi \rangle_{\text{s}}$, and
the MP distribution of $| \varphi \rangle_{\text{s}}$ is obtained by a
joint rotation of the MP distribution of $| \psi \rangle_{\text{s}}$
on the Majorana sphere along a common axis.  Therefore $| \psi
\rangle_{\text{s}}$ and $| \varphi \rangle_{\text{s}}$ have different
MPs, but the same \emph{relative} distribution of the MPs, and the
entanglement remains unchanged.

When it comes to the geometric measure of entanglement, we can be even
more precise.  For $n \geq 3$ qubits, every closest product state $|
\Lambda \rangle_{\text{s}}$ of a symmetric state $| \psi
\rangle_{\text{s}}$ is symmetric itself \cite{Hubener09}, so that one
can write $| \Lambda \rangle_{\text{s}} = | \sigma \rangle^{\otimes
  n}$ with a single qubit state $| \sigma \rangle$, and visualize $|
\Lambda \rangle_{\text{s}}$ by the Bloch vector of $| \sigma
\rangle$. In analogy to the Majorana points, we refer to $| \sigma
\rangle$ as a \emph{closest product point} (CPP).

For the calculation of the geometric measure of entanglement, the
overlap with a symmetric product state $| \lambda \rangle = | \sigma
\rangle^{\otimes n}$ is
\begin{equation}\label{bloch_product}
  | \langle \lambda | \psi \rangle_{\text{s}} | =
  \frac{n!}{\sqrt{K}} \, \prod_{i=1}^{n} \, | \langle \sigma |
  \phi_i \rangle | \enspace .
\end{equation}
The task of determining the CPP of a given symmetric state is thus
equivalent to maximizing the absolute value of a product of scalar
products. From a geometrical point of view, the $\langle \sigma |
\phi_i \rangle$ are the angles between the two corresponding points on
the Majorana sphere, and thus the determination of the CPP can be
viewed as an optimization problem for a product of geometrical angles.

\subsection{Examples}\label{examples}

We will now demonstrate the Majorana representation for two and three
qubit symmetric states.  The case of two qubits is very simple,
because any distribution of two points can be rotated on the Majorana
sphere in a way that both MPs are positive, with $| \phi_1 \rangle = |
0 \rangle$ and $| \phi_2 \rangle = \cos \tfra{\theta}{2} | 0 \rangle +
\sin \tfra{\theta}{2} | 1 \rangle$ for some $\theta \in [0, \pi
]$. One CPP of this MP distribution is easily found to be $| \sigma
\rangle = \cos \tfra{\theta}{4} | 0 \rangle + \sin \tfra{\theta}{4} |
1 \rangle$. \Fig{bell_pic} shows two examples for $\theta = \pi / 2$
and $\theta = \pi$, with the latter representing the Bell state $|
\psi^{+} \rangle = 1 / \sqrt{2} \left( |01\rangle + |10\rangle
\right)$. Due to the azimuthal symmetry of its MPs on the sphere, the
CPPs form a continuous ring $| \sigma \rangle = 1 / \sqrt{2} \left( |
  0 \rangle + \E^{\I \varphi} | 1 \rangle \right)$, with $\varphi \in
[0,2 \pi)$ around the equator.  The amount of entanglement is
$E_{\text{G}} ( | \psi^{+} \rangle ) = 1$.  For two qubits, the
maximally entangled symmetric states are easily found to be those
whose MPs lie diametrically opposite on the sphere.

\begin{figure}
  \begin{center}
    \begin{overpic}[scale=.5]{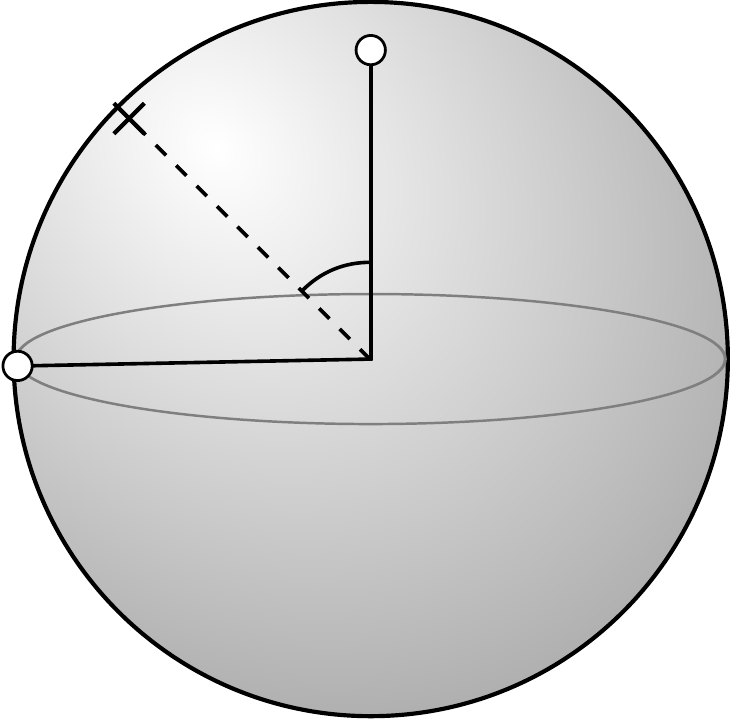}
      \put(-5,0){(a)}
      \put(32.5,86){$| \phi_1 \rangle$}
      \put(-15,54){$| \phi_2 \rangle$}
      \put(3,85){$| \sigma \rangle$}
      \put(45.3,54.7){$\theta$}
    \end{overpic}
    \hspace{5mm}
    \begin{overpic}[scale=.5]{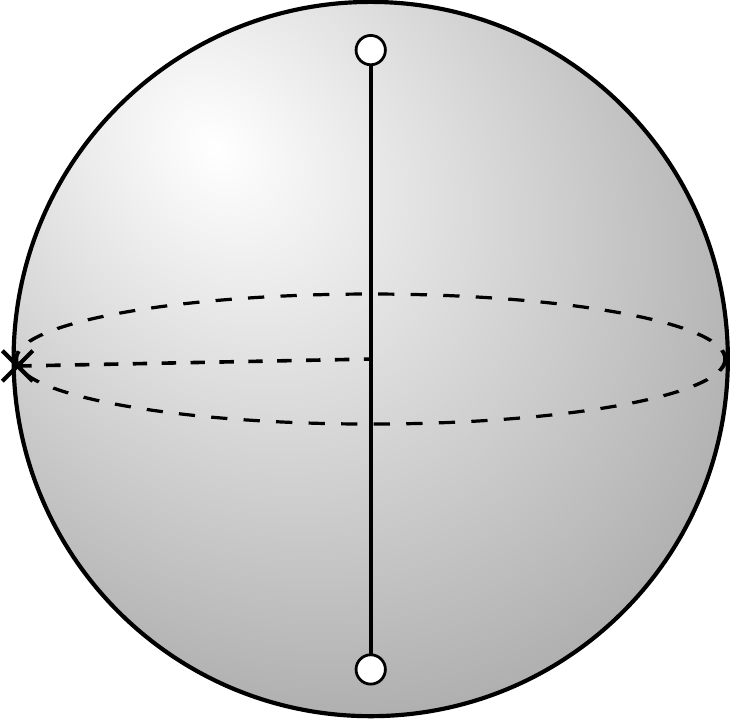}
      \put(-5,0){(b)}
      \put(32.5,86){$| \phi_1 \rangle$}
      \put(32.5,8){$| \phi_2 \rangle$}
      \put(-15,54){$| \sigma_1 \rangle$}
    \end{overpic}
  \end{center}
  \caption{\label{bell_pic} The Majorana representations of two
    symmetric states of two qubits.  MPs are shown as white circles
    and CPPs as dashed lines or crosses. Panel (a) depicts the state
    $\sqrt{2/3} \, | 00 \rangle + \sqrt{1/6} \, ( | 01 \rangle + | 10
    \rangle )$.  Its single CPP lies in the middle of the two MPs at
    an angle of $\theta = \pi / 4$. Panel (b) shows the Bell state $|
    \psi^{+} \rangle = 1 / \sqrt{2} \left( |01\rangle + |10\rangle
    \right)$, whose CPPs form a continuous ring on the equatorial
    belt.}
\end{figure}

For three qubit states, the GHZ state and the W state, both of which
are positive and symmetric, are considered as extremal among three
qubit states \cite{Tamaryan09}. The tripartite GHZ state $| \text{GHZ}
\rangle = 1 / \sqrt{2} \left( | 000 \rangle + | 111 \rangle \right)$
\cite{Greenberger90} has the MPs
\begin{equation}\label{GHZ-maj}
  \begin{split}
    | \phi_1 \rangle & = \tfra{1}{\sqrt{2}} \big( | 0 \rangle +
    | 1 \rangle \big) \enspace , \\
    | \phi_2 \rangle & = \tfra{1}{\sqrt{2}} \big( | 0 \rangle +
    \E^{\I 2 \pi / 3} | 1 \rangle \big) \enspace , \\
    | \phi_3 \rangle & = \tfra{1}{\sqrt{2}} \big( | 0 \rangle +
    \E^{\I 4 \pi / 3} | 1 \rangle \big) \enspace .
  \end{split}
\end{equation}
Its two CPPs are easily calculated to be $| \sigma_1 \rangle =
|0\rangle$ and $| \sigma_2 \rangle = |1\rangle$, yielding an
entanglement of $E_{\text{G}} ( | \text{GHZ} \rangle ) =
1$. \Fig{ghz_w_pic}(a) shows the distribution of the MPs and CPPs for
the GHZ state.  The three MPs form an equilateral triangle inside the
equatorial belt, and the two CPPs lie at the north and the south pole,
respectively.

\begin{figure}[b]
  \begin{center}
    \begin{overpic}[scale=.5]{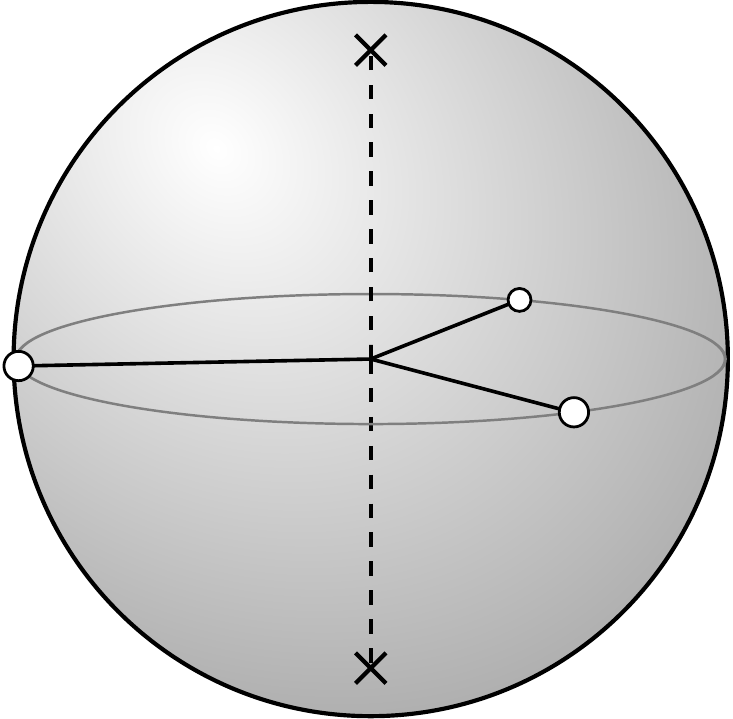}
      \put(-5,0){(a)}
      \put(-15,53){$| \phi_1 \rangle$}
      \put(71,31){$| \phi_2 \rangle$}
      \put(65,63){$| \phi_3 \rangle$}
      \put(32,86){$| \sigma_1 \rangle$}
      \put(32,8){$| \sigma_2 \rangle$}
    \end{overpic}
    \hspace{5mm}
    \begin{overpic}[scale=.5]{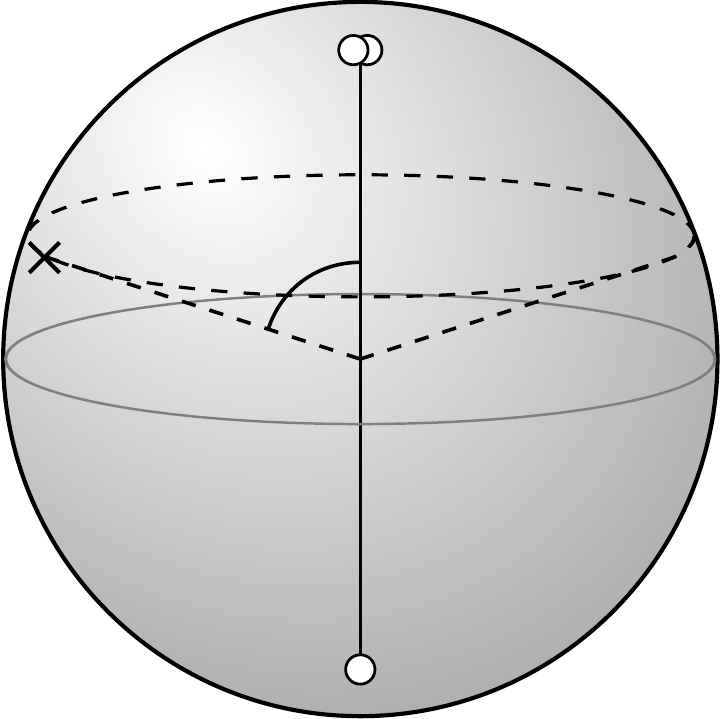}
      \put(-5,0){(b)}
      \put(31,86){$| \phi_1 \rangle$}
      \put(55,86){$| \phi_2 \rangle$}
      \put(53,8){$| \phi_3 \rangle$}
      \put(-11,71){$| \sigma_1 \rangle$}
      \put(44.5,53){$\theta$}
    \end{overpic}
  \end{center}
  \caption{\label{ghz_w_pic} The MPs and CPPs of the three qubit (a)
    GHZ state and (b) W state.  The GHZ state has two discrete CPPs
    whereas for the W state the CPPs form a continuous ring due to the
    azimuthal symmetry.}
\end{figure}

The W state $| \text{W} \rangle = | S_{3,1} \rangle = 1/ \sqrt{3} ( |
001 \rangle + | 010 \rangle + | 100 \rangle )$ is a Dicke state, and
its MPs can be immediately accessed from its form as
\begin{equation}\label{W-maj}
  \begin{split}
    | \phi_1 \rangle & = | \phi_2 \rangle = | 0 \rangle \enspace , \\
    | \phi_3 \rangle & = | 1 \rangle \enspace .
  \end{split}
\end{equation}
Generally, the definition \eqsimple{dicke_def} of the Dicke states $|
S_{n,k} \rangle$ asserts that $n-k$ MPs lie at the north pole and $k$
at the south pole.  \Eq{dicke_cs} yields $| \sigma_1 \rangle = \sqrt{
  2/3 } \, |0\rangle + \sqrt{ 1/3 } \, |1\rangle$ as a positive CPP of
the W state, and from the azimuthal symmetry of the MP distribution it
is clear that the set of all CPPs is formed by the ring of vectors $|
\sigma \rangle = \sqrt{2/3} \, | 0 \rangle + \E^{\I \varphi}
\sqrt{1/3} \, | 1 \rangle$, with $\varphi \in [0,2 \pi)$.
\Fig{ghz_w_pic}(b) shows the MPs and CPPs of $| \text{W} \rangle$. The
amount of entanglement is $E_{\text{G}} ( | \text{W} \rangle ) =
\log_2 \left( 9/4 \right) \approx 1.17$, which is higher than that of
the GHZ state.  It was recently shown that, in terms of the geometric
measure, the W state is the maximally entangled of all three qubit
states \cite{Chen10}.

\begin{figure}[b]
  \begin{center}
    \begin{minipage}{86mm}
      \begin{center}
        \begin{overpic}[scale=.22]{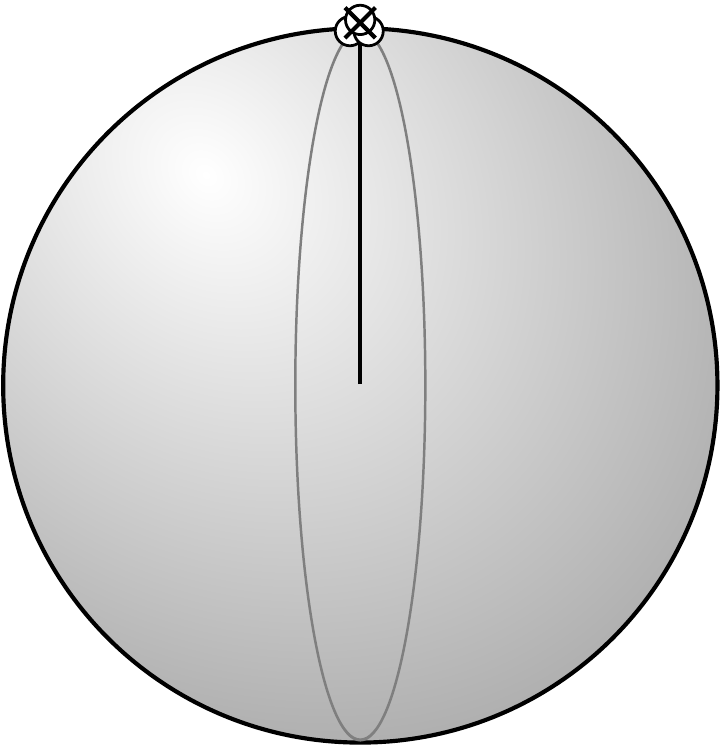} \put(-14,0){(a)}
        \end{overpic}
        \begin{overpic}[scale=.22]{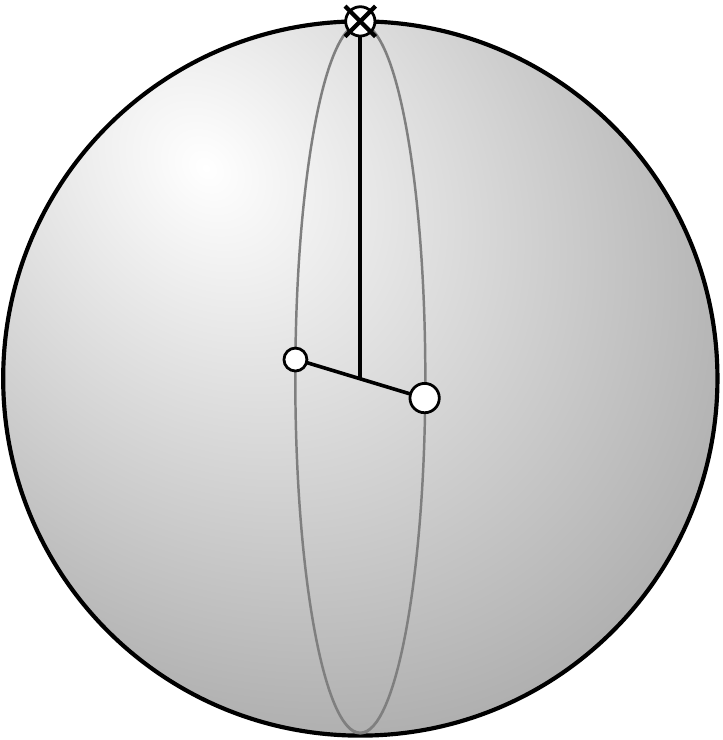} \put(-14,0){(b)}
        \end{overpic}
        \begin{overpic}[scale=.22]{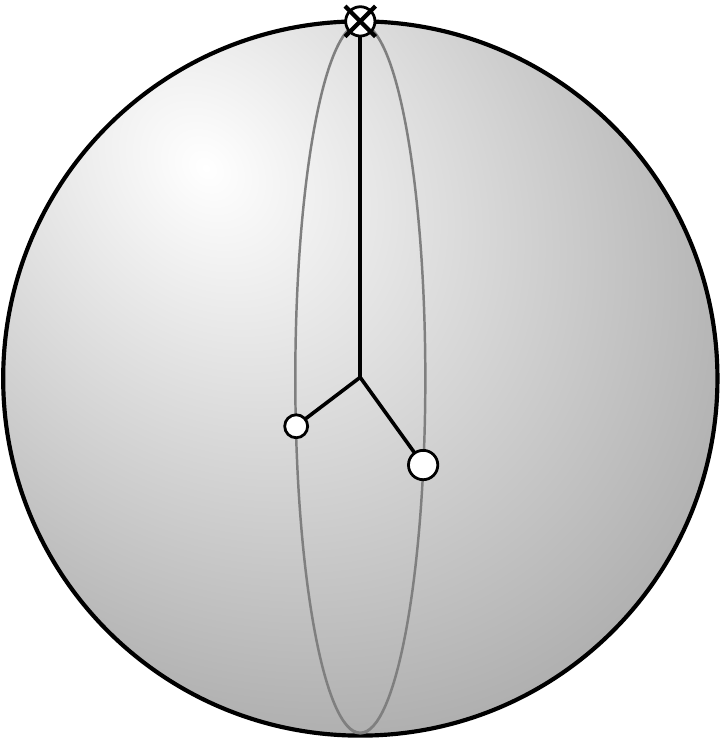} \put(-14,0){(c)}
        \end{overpic}
        \begin{overpic}[scale=.22]{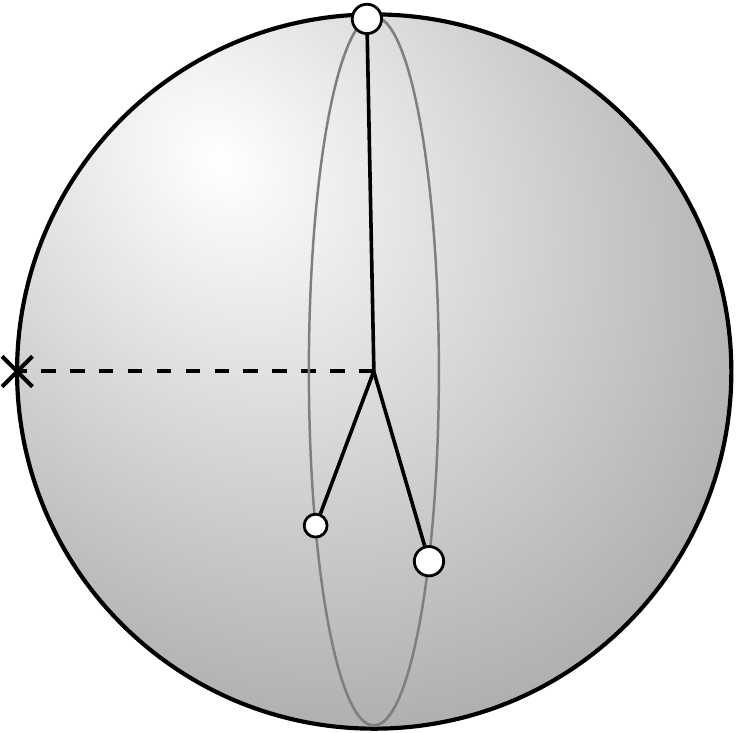} \put(-14,0){(d)}
        \end{overpic}
        \begin{overpic}[scale=.22]{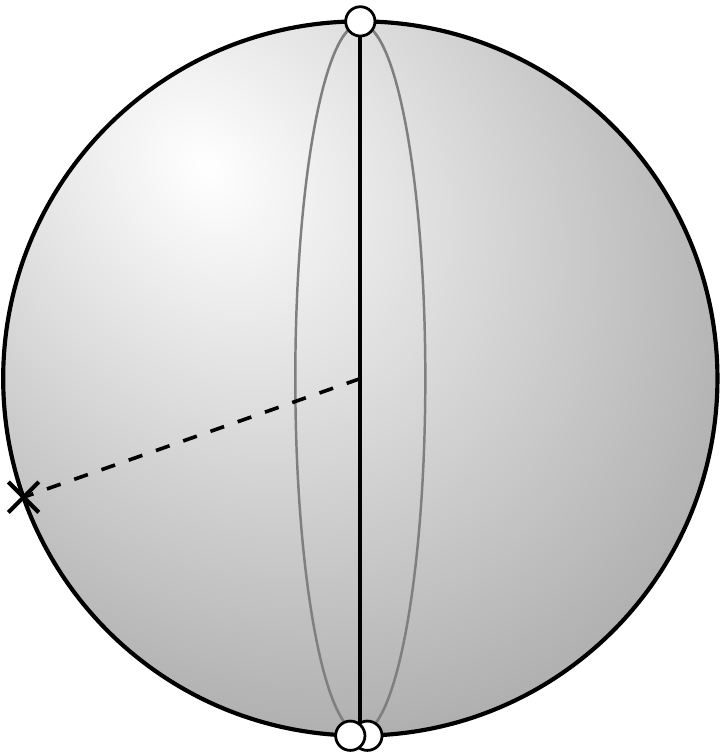} \put(-14,0){(e)}
        \end{overpic}
      \end{center}
    \end{minipage}
    \begin{minipage}{86mm}
      \vspace{4mm}
      \begin{center}
        \begin{overpic}[scale=.66]{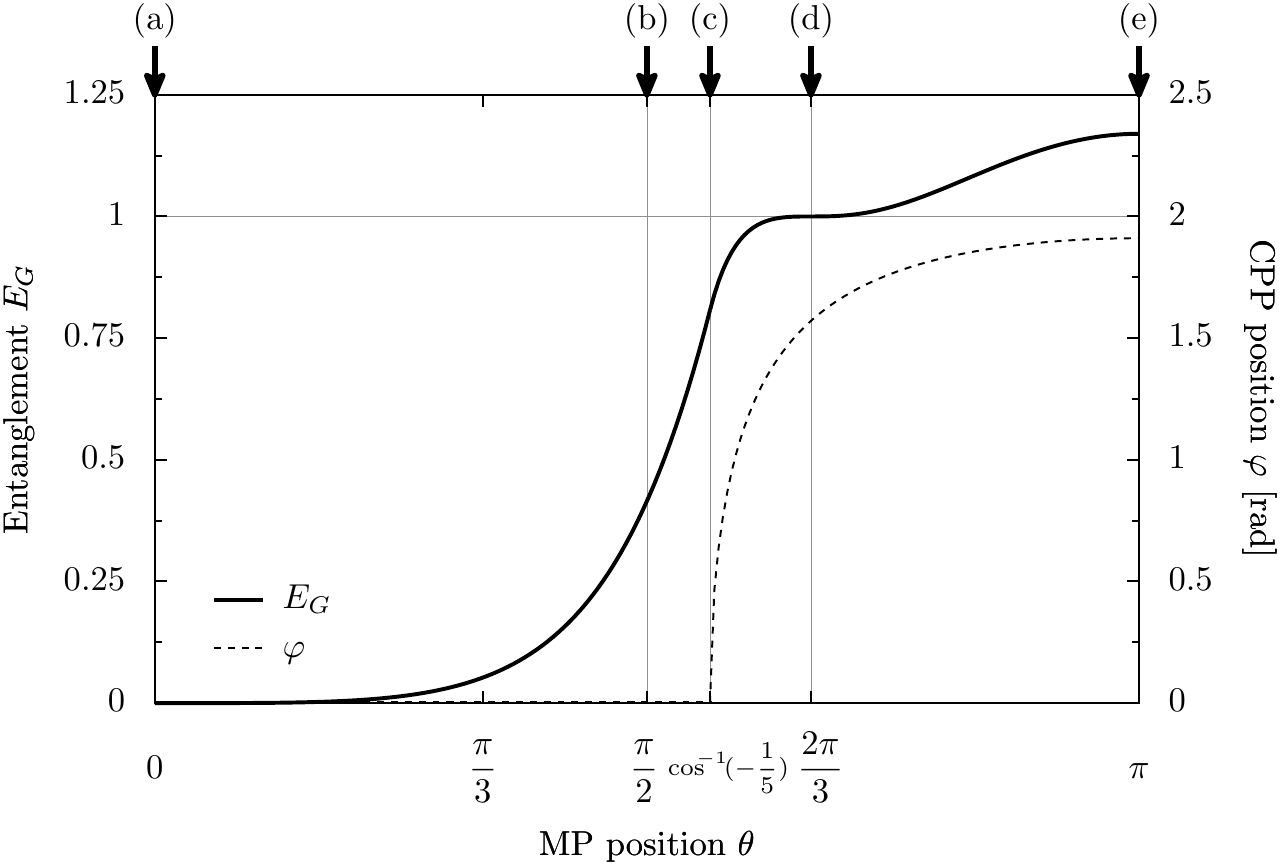}
        \end{overpic}
      \end{center}
    \end{minipage}
  \end{center}
  \caption{\label{3_graph} Change of the entanglement and the location
    of the CPP when the MP distribution is modified. The position of
    the CPP does not change until the two moving MPs have reached a
    latitude slightly below the equator. From the distribution (c)
    onwards, the CPP rapidly moves southwards and reaches the equator
    at the GHZ state (d). After that, the location of the CPP and the
    entanglement changes only weakly until the W state (e) is
    reached.}
\end{figure}

It is insightful to examine how the CPPs and the entanglement change
when the MP distribution of the underlying state is modified.
Starting out with three MPs lying on the north pole, two of the MPs
are moved southwards on opposite sides (cf. \fig{3_graph}), describing
an isosceles triangle with the remaining MP on the north pole.  Using
the abbreviations $\co_{\theta} = \cos (\theta / 2)$ and $\si_{\theta}
= \sin (\theta / 2)$, the MPs have the form
\begin{equation}\label{mp_3_form}
  \begin{split}
    | \phi_{1} \rangle & = | 0 \rangle \enspace , \\
    | \phi_{2,3} \rangle & = \co_{\theta} | 0 \rangle \pm \I \,
    \si_{\theta} | 1 \rangle \enspace ,
  \end{split}
\end{equation}
with the parametrization $\theta \in [0,\pi]$. The form of the
underlying quantum state follows from \eq{majorana_definition}
as
\begin{equation}\label{3_mp_state1}
  \vert \psi \rangle  = \frac{3 \, \co_{\theta}^2 \, \vert 000 \rangle +
    \si_{\theta}^2 \left( \vert 011 \rangle \! + \! \vert 101 \rangle
      \! + \! \vert 110 \rangle \right) }
  {\sqrt{9 \, \co_{\theta}^4 + 3 \, \si_{\theta}^4 }} \enspace .
\end{equation}
This state is positive, so Lemma \ref{lem_positive_symmetric} asserts
the existence of at least one positive CPP.  With the ansatz $\vert
\sigma \rangle = \co_{\varphi} \vert 0 \rangle + \si_{\varphi} \vert 1
\rangle$ for the CPP, the position of the CPP is found by calculating
the absolute maximum of $\vert \langle \psi \vert \sigma
\rangle^{\otimes 3} \vert$.  From this it is found that the parameter
$\varphi (\theta)$ of the CPP depends on the parameter $\theta$ of the
MPs as follows:
\begin{equation}\label{3_mp_state2}
  \co_{\varphi}^2 = \si_{\theta}^2 / (6 \si_{\theta}^2 - 3) \enspace .
\end{equation}
The permitted values of the left-hand side are [0,1], but the
right-hand side lies outside this range for $\theta < \pi - \arccos
(1/5)$. For these values the CPP is fixed at $\vert \sigma \rangle =
\vert 0 \rangle$.  \Fig{3_graph} shows how the CPP parameter $\varphi
( \theta )$ changes with $\theta$.  It is seen that from $\theta = \pi
- \arccos (1/5)$ onwards, the CPP abruptly leaves the north pole and
moves towards the south pole along the prime meridian.  From Equations
\eqsimple{3_mp_state1} and \eqsimple{3_mp_state2} the amount of
entanglement is easily calculated and is displayed in \fig{3_graph}.
$E_{\text{G}}$ is monotonously increasing \cite{Tamaryan09} and
reaches a saddle point at the GHZ state ($\theta = 2 \pi / 3$).

\subsection{Entanglement and extremal point distributions}
\label{extremal_point}

The main point of interest in this paper is the study of maximally
entangled symmetric states. For this the Majorana representation is
extremely helpful, because it allows the optimization problem of
maximizing the entanglement to be written in a simple form.  With the
help of \eq{bloch_product}, the min-max problem \eqsimple{minmax} for
finding the maximally entangled state can be reformulated as
\begin{equation}\label{maj_problem}
  \min_{ \{ | \phi_i \rangle \}} \frac{1}{\sqrt{K}}
  \left( \max_{ | \sigma \rangle } \, \prod_{i=1}^{n} \,
    | \langle \sigma | \phi_i \rangle | \right) \enspace .
\end{equation}
This `Majorana problem' bears all the properties of an optimization
problem on the surface of a sphere in $\mathbb{R}^3$.  These kinds of
problems deal with arrangements of a finite number of points on a
sphere so that an extremal property is fulfilled \cite{Whyte52}. Two
well-known members, T\'{o}th's problem and Thomson's problem, have
been extensively studied in the past.

\textbf{T\'{o}th's problem,} also known as Fejes' problem and Tammes'
problem, asks how $n$ points have to be distributed on the unit sphere
so that the minimum distance of all pairs of points becomes maximal
\cite{Whyte52}. This problem was first raised by the biologist Tammes
in 1930 when trying to explain the observed distribution of pores on
pollen grains \cite{Tammes30}. Recasting the $n$ points as unit
vectors $\mathbf{r}_{i} \in \mathbb{R}^3$, the following cost function
needs to be maximized:
\begin{equation}
   f_{\text{T\'{o}th}} ( \mathbf{r}_1 , \mathbf{r}_2 , \dots ,
   \mathbf{r}_{n} ) =
   \min_{i < j} \, | \mathbf{r}_{i} - \mathbf{r}_{j} | \enspace .
\end{equation}
The point configuration that solves this problem is called a spherical
code or sphere packing \cite{Weisstein}. The latter term refers to the
equivalent problem of placing $n$ identical spheres of maximal
possible radius around a central unit sphere, touching the unit sphere
at the points that solve T\'{o}th's problem.

\textbf{Thomson's problem,} also known as the Coulomb problem, asks
how $n$ point charges can be distributed on the surface of a sphere so
that the potential energy is minimized.  The charges interact with
each other only through Coulomb's inverse square law. Devised by
J. J. Thomson in 1904, this problem raises the question about the
stable patterns of up to 100 electrons on a spherical surface
\cite{Thomson04}.  Its cost function is given by the Coulomb energy
and needs to be minimized.
\begin{equation}
    f_{\text{Thomson}} ( \mathbf{r}_1 , \mathbf{r}_2 , \dots ,
    \mathbf{r}_{n} ) = \sum_{i < j} \, | \mathbf{r}_{i} -
    \mathbf{r}_{j} |^{-1} \enspace .
\end{equation}
The original motivation for Thomson's problem was to determine the
stable electron distribution of atoms in the plum pudding model.
While this model has been superseded by modern quantum theory, there
is a wide array of novel applications for Thomson's problem or its
generalization to other interaction potentials.  Among these are
multi-electron bubbles in liquid $^4$He \cite{Leiderer95}, surface
ordering of liquid metal drops confined in Paul traps \cite{Davis97},
the shell structure of spherical viruses \cite{Marzec93},
`colloidosomes' for encapsulating biochemically active substances
\cite{Dinsmore02}, fullerene patterns of carbon atoms \cite{Kroto85}
and the Abrikosov lattice of vortices in superconducting metal shells
\cite{Dodgson97}.

Exact solutions to T\'{o}th's problem are only known for
$n_{\text{To}} = 2-12,24$ points \cite{Erber91}, and in Thomson's
problem for $n_{\text{Th}} = 2-8,12$ points \cite{Erber91,Whyte52}.
Despite the different definitions of the two problems, they share the
same solutions for $n = 2-6,12$ points \cite{Leech57}.  Numerical
solutions are, furthermore, known for a wide range of $n$ in both
problems \cite{Ashby86,Altschuler94,Sloane,Wales}.

The solutions to $n = 2, 3$ are trivial and given by the dipole and
equilateral triangle, respectively.  For $n = 4,6,8,12,20$ the
Platonic solids are natural candidates, but they are the actual
solutions only for $n = 4,6,12$ \cite{Berezin85}. For $n = 8,20$ the
solutions are not Platonic solids and are different for the two
problems. We will cover the solutions for $n=4-12$ in more detail
alongside the Majorana problem in
\sect{maximally_entangled_symmetric_states}.

On symmetry grounds, one could expect that the center of mass of the
$n$ points always coincides with the sphere's middle point. This is,
however, not the case, as the solution to T\'{o}th's problem for $n=7$
\cite{Erber91} or the solution to Thomson's problem for $n = 11$ shows
\cite{Erber91, Ashby86}. Furthermore, the solutions need not be
unique. For T\'{o}th's problem, the first incident of this is $n = 5$
\cite{Ogilvy51}, and for Thomson's problem at $n=15$ \cite{Erber91}
and $n = 16$ \cite{Ashby86}.  These aspects show that it is, in
general, hard to make statements about the form of the `most spread
out' point distributions on the sphere.  The Majorana problem
\eqsimple{maj_problem} is considered to be equally tricky,
particularly with the normalization factor $K$ depending on the MPs.
Furthermore, the MPs of the solution need not all be spread out far
from each other, as demonstrated by the three qubit $| \text{W}
\rangle$ state with its two coinciding MPs.

\section{States and symmetries of MP and CPP distributions}
\label{analytic}

In this section, results for the interdependence between the form of
$n$ qubit symmetric states and their Majorana representation will be
derived. More specifically, it will be examined what the distributions
of MPs and CPPs look like for states whose coefficients are real,
positive or vanishing. In some of these cases the MPs or CPPs have
distinct patterns on the sphere, which can be described by symmetries.
In this context, care has to be taken as to the meaning of the word
`symmetric'. Permutation-\emph{symmetric} states were introduced in
\sect{positive_and_symm}, and only these states can be represented by
point distributions on the Majorana sphere.  For some of these
symmetric states, their MP distribution exhibits symmetry properties
on the sphere.  Examples of this can be found in \fig{ghz_w_pic},
where the GHZ state and W state have \emph{rotational symmetries}
around the Z-axis, as well as \emph{reflective symmetries} along some
planes.

Let $| \psi \rangle_{\text{s}} = \sum_{k = 0}^{n} a_k | S_{k} \rangle$
be a general symmetric state of $n$ qubits.  To understand the
relationship between the state's coefficients and the Majorana
representation, consider the effect of symmetric LUs.  A symmetric LU
acting on the Hilbert space of an $n$ qubit system is defined as the
$n$-fold tensor product of a single-qubit unitary operation:
$U^{\text{s}} = U \otimes \cdots \otimes U$. This is precisely the LU
map that was shown in \eq{m_def_1} and \eqsimple{lusphere} to map
every symmetric state to another symmetric state.

Considering the Hilbert space of a single qubit, the rotation operator
for $Z$-axis rotations of the qubit is
\begin{equation}\label{z_rotationmatrix}
  R_z (\theta) =
  \begin{pmatrix}
    1 & 0 \\
    0 & \E^{\I \theta}
  \end{pmatrix} \enspace ,
\end{equation}
and the rotation operator for $Y$-axis rotations of the qubit is
\begin{equation}\label{y_rotationmatrix}
  R_y (\theta) =
  \begin{pmatrix}
    \cos \frac{\theta}{2} & - \sin \frac{\theta}{2}  \vspace*{0.8mm} \\
    \sin \frac{\theta}{2} & \phantom{-} \cos \frac{\theta}{2}
  \end{pmatrix} \enspace .
\end{equation}
$R_z$ changes the relative phase, but not the absolute value of the
qubit's coefficients.  Conversely, $R_y$ changes the absolute value,
but not the relative phase of the coefficients.  From
\eq{majorana_definition} it is easily seen that $R_z$ and $R_y$ pass
this behavior on to the symmetric LUs $R^{\, \text{s}}_z :=
R_z^{\otimes n}$ and $R^{\, \text{s}}_y := R_y^{\otimes n}$. For
example, the effect of $R^{\, \text{s}}_z$ on $| \psi
\rangle_{\text{s}}$ is
\begin{equation}\label{rot_z}
  R^{\, \text{s}}_z (\theta) \, | \psi \rangle_{\text{s}} =
  \sum_{k = 0}^{n} a_k \E^{ \I k \theta} \, | S_{k} \rangle \enspace .
\end{equation}
From this it is easy to determine the conditions for the MPs of $|
\psi \rangle_{\text{s}}$ having a rotational symmetry around the
Z-axis, i.e. $R^{\, \text{s}}_z (\theta) \, | \psi \rangle_{\text{s}}
= | \psi \rangle_{\text{s}}$ (up to a global phase) for some $\theta <
2 \pi$.  From \eq{rot_z} it is clear that the possible rotational
angles (up to multiples) are restricted to $\theta = 2 \pi / m$, with
$m \in \mathbb{N}$, $1<m \leq n$.  The necessary and sufficient
conditions are:

\begin{lem}\label{rot_symm}
  The MP distribution of a symmetric $n$ qubit state $| \psi
  \rangle_{\text{s}}$ is rotationally symmetric around the $Z$-axis
  with rotational angle $\theta = 2 \pi / m$ ($\, 1<m \leq n$) iff
\begin{equation}\label{rot_cond}
  \forall \{ k_i , k_j | \, a_{k_{i}} \neq 0 \wedge
  a_{k_{j}} \neq 0 \} : ( k_i - k_j ) \bmod m = 0
\end{equation}
\end{lem}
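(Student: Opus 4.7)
The plan is to exploit the uniqueness of the Majorana representation together with the explicit action of $R_{z}^{\,\text{s}}$ given in \eq{rot_z}. First I would translate the geometric statement into an algebraic one: an MP distribution has a $Z$-axis rotational symmetry with angle $\theta = 2\pi/m$ if and only if applying the corresponding symmetric LU $R_{z}^{\,\text{s}}(2\pi/m)$ leaves the state invariant up to an overall phase, i.e.\ $R_{z}^{\,\text{s}}(2\pi/m)\,|\psi\rangle_{\text{s}} = \E^{\I\alpha}\,|\psi\rangle_{\text{s}}$ for some $\alpha \in \mathbb{R}$. The $\Leftarrow$ direction of this equivalence is immediate from \eq{lusphere} (rotating the state rotates each MP by the same angle), and the $\Rightarrow$ direction follows because two symmetric states with identical MP sets can only differ by a global phase.

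Next I would use \eq{rot_z} to read off the algebraic condition. Substituting $\theta = 2\pi/m$ gives
\begin{equation*}
 R_{z}^{\,\text{s}}(2\pi/m)\,|\psi\rangle_{\text{s}}
 = \sum_{k=0}^{n} a_k \, \E^{\I 2\pi k /m} \, |S_k\rangle ,
\end{equation*}
and demanding that this equal $\E^{\I\alpha}\,|\psi\rangle_{\text{s}}$ forces $a_k \bigl(\E^{\I 2\pi k/m} - \E^{\I\alpha}\bigr) = 0$ for every $k$. Hence for every $k$ with $a_k \neq 0$ the phase $\E^{\I 2\pi k/m}$ must take a common value $\E^{\I\alpha}$. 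Equivalently, any two indices $k_i,k_j$ with $a_{k_i},a_{k_j}\neq 0$ satisfy $\E^{\I 2\pi(k_i - k_j)/m} = 1$, which is exactly $(k_i - k_j)\bmod m = 0$. This establishes the necessity.

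For sufficiency I would run the same chain in reverse: if $(k_i - k_j)\bmod m = 0$ whenever $a_{k_i},a_{k_j}\neq 0$, then all such $k$ share a common residue $k_0$ modulo $m$, so $\E^{\I 2\pi k/m} = \E^{\I 2\pi k_0/m}$ on the support of $\{a_k\}$. Factoring this common phase out of \eq{rot_z} yields $R_{z}^{\,\text{s}}(2\pi/m)\,|\psi\rangle_{\text{s}} = \E^{\I 2\pi k_0/m}\,|\psi\rangle_{\text{s}}$, whose MP distribution therefore coincides with that of $|\psi\rangle_{\text{s}}$, proving rotational symmetry.

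The main subtlety to be careful about is the $\Rightarrow$ step of the geometric-to-algebraic translation: one must invoke the uniqueness of the Majorana decomposition (\eq{majorana_definition}) to conclude that equal MP sets imply proportional state vectors. Everything else is bookkeeping with roots of unity, and no delicate estimate is required.
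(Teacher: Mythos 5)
Your proof is correct and follows essentially the same route as the paper's: both identify rotational symmetry of the MP distribution with invariance under $R^{\,\text{s}}_z(2\pi/m)$ up to a global phase and then compare Dicke-basis coefficients via \eq{rot_z}, reducing everything to the existence of a common residue modulo $m$ on the support of $\{a_k\}$. If anything you are more complete than the paper, which only writes out the sufficiency direction explicitly, whereas you also spell out necessity and the uniqueness-of-the-Majorana-representation step underlying the geometric-to-algebraic translation.
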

\begin{proof}
  \Eq{rot_cond} is equivalent to: $\exists \: l \in \mathbb{Z} :
  \forall \{ k | \, a_{k} \neq 0 \} : k \bmod m = l$.  From this it
  follows that $R^{\, \text{s}}_z (2 \pi / m) \, | \psi
  \rangle_{\text{s}} = \sum_{k} a_k \exp(\I 2 \pi k/m) \vert S_k
  \rangle = \sum_{k} a_k \exp(\I 2 \pi l/m) \vert S_k \rangle = \E^{\I
    \delta} | \psi \rangle_{\text{s}}$, with $\delta = 2 \pi l/m$.
\end{proof}

In other words, a sufficient number of coefficients need to vanish,
and the spacings between the remaining coefficients must be multiples
of $m$. For example, a symmetric state of the form $\vert \psi
\rangle_{\text{s}} = a_3 \vert S_3 \rangle + a_7 \vert S_7 \rangle +
a_{15} \vert S_{15} \rangle$ is rotationally symmetric with $\theta =
\pi / 2$, because the spacings between non-vanishing coefficients are
multiples of $4$.

\begin{lem}\label{maj_max}
  Every maximally entangled symmetric state $| \Psi
  \rangle_{\text{s}}$ of $n$ qubits has at least two different CPPs.
\end{lem}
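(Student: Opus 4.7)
The plan is to argue by contradiction: assuming $|\Psi\rangle_{\text{s}}$ has a unique CPP, we construct a perturbation within the symmetric subspace that strictly decreases $G$, contradicting the maximality of $E_{\text{G}}(|\Psi\rangle_{\text{s}})$. After applying a symmetric LU to rotate the putative unique CPP to $|0\rangle$, we may take $|\Lambda_{\Psi}\rangle_{\text{s}} = |S_0\rangle$; the rotation preserves $E_{\text{G}}$ and bijectively maps CPPs to CPPs, so uniqueness is preserved. Writing $|\Psi\rangle_{\text{s}} = \sum_{k=0}^n a_k |S_k\rangle$ with $a_0 > 0$ (after fixing a global phase), we have $G(|\Psi\rangle_{\text{s}}) = a_0$, and $0 < a_0 < 1$ since $|\Psi\rangle_{\text{s}}$ is entangled.

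Define the normalized symmetric state
\begin{equation*}
|\chi\rangle_{\text{s}} = \frac{|S_0\rangle - a_0 |\Psi\rangle_{\text{s}}}{\sqrt{1 - a_0^2}} \enspace ,
\end{equation*}
which is orthogonal to $|\Psi\rangle_{\text{s}}$ and obeys $\langle S_0|\chi\rangle_{\text{s}} = \sqrt{1-a_0^2}$, and consider the one-parameter family $|\psi(t)\rangle_{\text{s}} = \cos(t)|\Psi\rangle_{\text{s}} + \sin(t)|\chi\rangle_{\text{s}}$ of normalized symmetric states. A direct calculation gives $|\langle 0|^{\otimes n}\psi(t)\rangle_{\text{s}}|^2 = (a_0\cos(t) + \sqrt{1-a_0^2}\sin(t))^2$, whose derivative at $t = 0$ is $2 a_0 \sqrt{1-a_0^2} > 0$. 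The essential step is then to differentiate through the max in $G^2(|\psi(t)\rangle_{\text{s}}) = \max_{|\sigma\rangle}|\langle\sigma|^{\otimes n}\psi(t)\rangle_{\text{s}}|^2$: since the Bloch sphere is compact and (by the uniqueness hypothesis) the argmax at $t = 0$ is the singleton $\{|0\rangle\}$, Danskin's theorem yields
\begin{equation*}
\left.\frac{\D}{\D t} G^2(|\psi(t)\rangle_{\text{s}})\right|_{t=0} = 2 a_0 \sqrt{1-a_0^2} > 0 \enspace .
\end{equation*}
Hence $G^2(|\psi(t)\rangle_{\text{s}}) < a_0^2$ for all sufficiently small $t < 0$, producing a symmetric state strictly more entangled than $|\Psi\rangle_{\text{s}}$---the required contradiction.

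The main obstacle is the envelope-theorem step, whose validity rests entirely on the uniqueness hypothesis: if multiple CPPs were present, Danskin's theorem would only return the derivative of $G^2$ as a \emph{maximum} of directional derivatives over the (multi-point) argmax set, and along a fixed tangent direction this need not have a definite sign. An elementary alternative would partition the sphere into a neighborhood of $|0\rangle$, where a Taylor expansion using the first-order optimality $a_1 = 0$ inherited from the unique CPP controls the overlap, and its complement, where compactness plus uniqueness furnishes a uniform gap $|\langle\sigma|^{\otimes n}\Psi\rangle_{\text{s}}| \leq a_0 - \delta$; this approach is straightforward but considerably more tedious.
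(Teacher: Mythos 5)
Your proof is correct, and it follows the same overall strategy as the paper: assume a unique CPP, rotate it to the north pole so that $G(|\Psi\rangle_{\text{s}}) = a_0$, and exhibit a perturbation of the state that strictly decreases the maximal overlap, contradicting maximality. The differences are in execution, and they are worth noting. The paper perturbs two specific coefficients, replacing $a_0$ by $a_0 - \epsilon$ and $a_n$ by $a_n + \epsilon a_0 / a_n^{*}$, then Taylor-expands the overlap near the north pole and argues informally that, because the maximum at $|0\rangle$ is strict and isolated, an infinitesimal change of the state cannot create a competing global maximum elsewhere; note that this particular perturbation implicitly requires $a_n \neq 0$, a case the paper does not address. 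Your construction instead moves the state along the exactly normalized path $\cos(t)|\Psi\rangle_{\text{s}} + \sin(t)|\chi\rangle_{\text{s}}$ in the two-dimensional span of $|\Psi\rangle_{\text{s}}$ and $|S_0\rangle$, which is always well defined, and you replace the paper's informal continuity argument by an explicit appeal to Danskin's theorem, using compactness of the Bloch sphere and the singleton argmax to differentiate through the maximum. This is the cleaner and more robust version of the same idea; your closing remark correctly identifies that the uniqueness hypothesis is exactly what makes the envelope step (and the paper's corresponding informal step) work. The only cosmetic omission is that you do not separately dispose of $n = 2, 3$ as the paper does, but your argument does not actually need that case split, since it shows that \emph{any} entangled symmetric state with a unique CPP fails to be maximally entangled.
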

\begin{proof}
  The cases $n = 2, 3$ are trivial, because their maximally entangled
  states (Bell states and W state, respectively) have an infinite
  number of CPPs.  For $n > 3$, we consider a symmetric state $| \psi
  \rangle$ with only one CPP $| \sigma \rangle$ and show that $| \psi
  \rangle$ cannot be maximally entangled.

  Because of the LU invariance on the Majorana sphere
  (cf. \eq{lusphere}), we can take the CPP to be the north pole,
  i.e. $| \sigma \rangle = | 0 \rangle$.  Denoting a single qubit with
  $| \omega \rangle = \co_{\theta} |0\rangle + \E^{\I \varphi}
  \si_{\theta} |1\rangle$, the smooth and continuous overlap function
  $g( |\omega \rangle ) = | \langle \psi | \omega \rangle^{\otimes
    n}|$ then has its absolute maximum at $| \omega \rangle = | 0
  \rangle$.  For any other local maximum $| \omega ' \rangle$, the
  value of $g(| \omega ' \rangle)$ is smaller than $g(|0\rangle)$ and
  therefore an infinitesimal change in the MPs of $| \psi \rangle$
  cannot lead to a CPP outside a small neighborhood of the north pole.

  We will now present an explicit variation of $| \psi \rangle$ that
  increases the entanglement.  $| \psi \rangle = \sum_{k=0}^n a_k |
  S_k \rangle$ has complex coefficients that fulfil $\langle \psi |
  \psi \rangle= 1$, as well as $a_0 > 0$ and $a_1 = 0$ \footnote{$a_0$
    can be set positive by means of the global phase, and for
    symmetric states with $| 0 {\rangle}$ as a CPP it is easy to
    verify that $a_1 = 0$ is a necessary condition for the partial
    derivatives of $| {\langle} \psi | \omega {\rangle}^{\otimes n}|$
    being zero at $| \omega {\rangle} = | 0 {\rangle}$.}.  Define the
  variation as $| \psi_{\epsilon} \rangle = (a_0 - \epsilon ) | S_0
  \rangle + \sum_{k=2}^{n-1} a_k | S_k \rangle + (a_n + \epsilon a_0/
  a_n^{*} ) | S_n \rangle$, with $\epsilon \ll 1$.  This state fulfils
  the requirement $| \psi_{\epsilon} \rangle \stackrel{\epsilon
    \rightarrow 0}{\longrightarrow} | \psi \rangle$, and is
  normalized: $\langle \psi_{\epsilon} | \psi_{\epsilon} \rangle = 1 +
  \Order (\epsilon^2)$.  We now investigate the values of
  $g_{\epsilon} ( |\omega \rangle ) = |\langle \psi_{\epsilon} |
  \omega \rangle^{\otimes n}|$ around the north pole. In this area $|
  \omega \rangle = (1 - (\theta^2 / 8) + \Order (\theta^4))
  |0\rangle + \E^{\I \varphi} ( (\theta / 2) - \Order (\theta^3))
  |1\rangle$, hence $g_{\epsilon} ( |\omega \rangle ) = | (1 -
  (\theta^2 / 8))^n (a_0 - \epsilon ) | + \Order (\epsilon^2,
  \theta^2) = | a_0 - \epsilon | + \Order (\epsilon^2, \theta^2) <
  |a_0| = g ( |0 \rangle )$ for small, but nonzero $\epsilon$ and
  $\theta$.  Therefore the absolute maximum of $g_{\epsilon}$ is
  smaller than that of $g$, and $| \psi_{\epsilon} \rangle$ is more
  entangled than $| \psi \rangle$.
\end{proof}

\subsection{Real symmetric states}
\label{maximally_entangled_real_states}

For symmetric states with real coefficients, the following lemma
asserts a reflection symmetry of the MPs and CPPs with respect to the
$X$-$Z$-plane that cuts the Majorana sphere in half.  In mathematical
terms, the MPs and CPPs exhibit a reflection symmetry with respect to
the $X$-$Z$-plane iff for each MP $| \phi_i \rangle = \co_{\theta}
|0\rangle + \E^{\I \varphi} \si_{\theta} |1\rangle$ the complex
conjugate point $| \phi_i \rangle^{*} = \co_{\theta} |0\rangle + \E^{-
  \I \varphi} \si_{\theta} |1\rangle$ is also a MP, and the same holds
for CPPs too.

\begin{lem}\label{maj_real}
  Let $| \psi \rangle_{\text{s}}$ be a symmetric state of $n$ qubits.
  $| \psi \rangle_{\text{s}}$ is real iff all its MPs are reflective
  symmetric with respect to the $X$-$Z$-plane of the Majorana sphere.
\end{lem}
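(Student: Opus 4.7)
The plan is to exploit the explicit map between the coefficients $a_k$ of $|\psi\rangle_\text{s} = \sum_k a_k |S_{n,k}\rangle$ and the MPs given by \eqsimple{state_to_mp}, together with the uniqueness of the Majorana representation (the multiset of MPs is uniquely determined by the state).

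The key observation is to track what happens under reflection in the $X$-$Z$-plane. Reflecting an MP $|\phi_i\rangle = \cos(\theta_i/2)|0\rangle + \E^{\I\varphi_i}\sin(\theta_i/2)|1\rangle$ sends $\varphi_i \to -\varphi_i$, i.e., it takes $|\phi_i\rangle$ to its complex conjugate (in the computational basis). Under this operation, $\tC_i = \cos(\theta_i/2)$ is unchanged while $\tS_i = \E^{\I\varphi_i}\sin(\theta_i/2)$ is replaced by $\tS_i^{*}$. Substituting this into \eqsimple{state_to_mp} shows that the new coefficients are $a_k^{*}$. Equivalently, because the Dicke basis vectors have real expansions in the computational basis, the state whose MPs are $\{|\phi_i\rangle^{*}\}$ is simply the componentwise complex conjugate of $|\psi\rangle_\text{s}$ in the Dicke basis.

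For the ``$\Rightarrow$'' direction, assume $a_k \in \mathbb{R}$ for all $k$. Then $|\psi\rangle_\text{s}$ equals its Dicke-basis complex conjugate, i.e.\ the symmetric state built from $\{|\phi_i\rangle^{*}\}$ coincides with the one built from $\{|\phi_i\rangle\}$. Invoking the uniqueness of the Majorana representation, the multisets $\{|\phi_i\rangle\}$ and $\{|\phi_i\rangle^{*}\}$ must agree, which is precisely reflective symmetry of the MPs with respect to the $X$-$Z$-plane. For the ``$\Leftarrow$'' direction, suppose the MPs are reflection-symmetric, so $\{|\phi_i\rangle^{*}\}=\{|\phi_i\rangle\}$ as multisets. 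Then the symmetric state with MPs $\{|\phi_i\rangle^{*}\}$ equals $|\psi\rangle_\text{s}$, but by the observation above this state has coefficients $a_k^{*}$. Hence $a_k^{*} = a_k$ for all $k$, so $|\psi\rangle_\text{s}$ is real.

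The only slightly delicate point is the appeal to uniqueness of the Majorana representation and the care required in handling the multiset structure (MPs may repeat), but uniqueness is standard for the Majorana map and the argument only needs set-theoretic equality. Since the normalization factor $K$ in \eqsimple{majorana_definition} depends only on inner products $\langle\phi_i|\phi_j\rangle$, and these are symmetric under simultaneous complex conjugation of both arguments, $K$ is unaffected by reflection, so no additional normalization issue arises. I would also briefly note that this lemma gives a companion statement to Lemma \ref{rot_symm}: rotational symmetries about $Z$ correspond to vanishing patterns among the $a_k$, whereas reflection symmetry about the $X$-$Z$-plane corresponds to the $a_k$ being real.
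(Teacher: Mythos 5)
Your proof is correct and follows essentially the same route as the paper's: both directions rest on the fact that conjugating all MPs conjugates the Dicke coefficients (equivalently, that the Majorana representation of $|\psi\rangle_{\text{s}}^{*}$ consists of the conjugated MPs) combined with uniqueness of the Majorana representation, the only difference being that the paper argues the converse direction by pairing each non-real MP with its conjugate inside the permutation sum rather than via \eqref{state_to_mp}. One cosmetic slip in your normalization remark: $\langle\phi_i^{*}|\phi_j^{*}\rangle=\langle\phi_i|\phi_j\rangle^{*}$ rather than $\langle\phi_i|\phi_j\rangle$, but since $K$ is real it is indeed unchanged under conjugating all MPs, so your conclusion stands.
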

\begin{proof}
  ($\Rightarrow$) Let $| \psi \rangle_{\text{s}}$ be a real state.
  Then $| \psi \rangle_{\text{s}} = | \psi \rangle_{\text{s}}^{*}$,
  and since Majorana representations are unique, $| \psi
  \rangle_{\text{s}}$ has the same MPs as $| \psi
  \rangle_{\text{s}}^{*}$.  Therefore the complex conjugate $| \phi_i
  \rangle^{*}$ of each non-real MP $| \phi_i \rangle$ is also a MP.

  ($\Leftarrow$) Let the MPs of $| \psi \rangle_{\text{s}}$ be
  symmetric with respect to the $X$-$Z$-plane. Then for every nonreal
  MP $| \phi_i \rangle$ its complex conjugate $| \phi_i \rangle^{*}$
  is also a MP.  Because $\left( | \phi_i \rangle | \phi_i \rangle^{*}
    + | \phi_i \rangle^{*} | \phi_i \rangle \right)$ is real, it
  becomes clear, from the permutation over all MPs in
  \eq{majorana_definition}, that the overall state $| \psi
  \rangle_{\text{s}}$ is real, too.
\end{proof}

The reflective symmetry of the MPs naturally leads to the same
symmetry for the CPPs.

\begin{cor}\label{cpp_real}
  Let $| \psi \rangle_{\text{s}}$ be a symmetric state of $n$
  qubits. If $| \psi \rangle_{\text{s}}$ is real, then all its CPPs
  are reflective symmetric with respect to the $X$-$Z$-plane of the
  Majorana sphere.
\end{cor}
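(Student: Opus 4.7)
The plan is to show that whenever $|\sigma\rangle$ is a closest product point of a real symmetric state $|\psi\rangle_{\text{s}}$, its complex conjugate $|\sigma\rangle^{*}$ — which is the Bloch-sphere reflection of $|\sigma\rangle$ across the $X$-$Z$-plane — is also a CPP. This is a natural companion to Lemma \ref{maj_real}, and the key observation is that the overlap $|\langle \sigma|^{\otimes n}|\psi\rangle_{\text{s}}|$ is invariant under $|\sigma\rangle \mapsto |\sigma\rangle^{*}$ when $|\psi\rangle_{\text{s}}$ is real.

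First I would fix the chosen computational basis so that ``real'' and ``complex conjugation'' are unambiguous, and parametrize a general CPP by $|\sigma\rangle = \co_{\theta}|0\rangle + \E^{\I\varphi}\si_{\theta}|1\rangle$, whose reflection across the $X$-$Z$-plane is $|\sigma\rangle^{*} = \co_{\theta}|0\rangle + \E^{-\I\varphi}\si_{\theta}|1\rangle$. I would then give two equivalent arguments and pick whichever reads more cleanly. The direct route expands $\langle \psi|_{\text{s}}\,|\sigma\rangle^{\otimes n} = \sum_{k} a_{k}\sqrt{\binom{n}{k}}\,\co_{\theta}^{\,n-k}\si_{\theta}^{\,k}\E^{\I k\varphi}$; since $a_{k}\in\mathbb{R}$, sending $\varphi\mapsto-\varphi$ conjugates the whole sum, so the modulus is unchanged and a maximizer $|\sigma\rangle$ forces $|\sigma\rangle^{*}$ to be a maximizer too.

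The alternative route, which ties in more directly with the preceding results, uses Lemma \ref{maj_real} together with the Majorana product formula \eqref{bloch_product}. By Lemma \ref{maj_real} the MP multiset $\{|\phi_{i}\rangle\}$ is closed under complex conjugation, so $\{|\phi_{i}\rangle^{*}\} = \{|\phi_{i}\rangle\}$. Combined with $|\langle \sigma^{*}|\phi_{i}\rangle| = |\langle \sigma|\phi_{i}^{*}\rangle|$ and the fact that $K$ depends only on the MPs (hence is unaffected), this yields
\begin{equation*}
\frac{n!}{\sqrt{K}}\prod_{i=1}^{n}|\langle \sigma^{*}|\phi_{i}\rangle| = \frac{n!}{\sqrt{K}}\prod_{i=1}^{n}|\langle \sigma|\phi_{i}\rangle|,
\end{equation*}
so $|\sigma\rangle$ and $|\sigma\rangle^{*}$ give the same overlap with $|\psi\rangle_{\text{s}}$ and are therefore simultaneously CPPs or not.

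I do not foresee a serious obstacle: the statement is essentially a bookkeeping consequence of Lemma \ref{maj_real} plus the fact that the overlap expression is a polynomial with real coefficients in $\co_{\theta},\si_{\theta},\E^{\I\varphi}$. The only point that merits a short remark is that ``the $X$-$Z$-plane'' is a basis-dependent notion and must be understood with respect to the same computational basis in which $|\psi\rangle_{\text{s}}$ has real coefficients, matching the convention already used for Lemma \ref{maj_real}. If one also wanted to highlight that CPPs lying \emph{on} the $X$-$Z$-plane satisfy $|\sigma\rangle = |\sigma\rangle^{*}$ trivially, so the symmetry is vacuous but not violated.
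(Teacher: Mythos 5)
Your proposal is correct and follows essentially the same route as the paper, whose proof likewise invokes Lemma \ref{maj_real} and then takes the complex conjugate of the optimization problem \eqref{maj_problem} to conclude that $|\sigma\rangle^{*}$ is a CPP whenever $|\sigma\rangle$ is; your second argument is exactly this, spelled out via $|\langle\sigma^{*}|\phi_i\rangle| = |\langle\sigma|\phi_i^{*}\rangle|$ and the conjugation-invariance of the MP multiset and of $K$. The direct Dicke-basis computation you offer as a first route is an equivalent, slightly more explicit variant, and your remark about fixing the computational basis is a sensible clarification.
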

\begin{proof}
  Lemma \ref{maj_real} asserts that for every MP $| \phi_i \rangle$ of
  $| \psi \rangle_{\text{s}}$, the complex conjugate $| \phi_i
  \rangle^{*}$ is also a MP.  By considering the complex conjugate of
  the optimization problem \eqsimple{maj_problem}, it becomes clear
  that for any CPP $| \sigma \rangle$ the complex conjugate $| \sigma
  \rangle^{*}$ is also a CPP.
\end{proof}

\subsection{Positive symmetric states}
\label{maximally_entangled_positive_states}

For symmetric states with positive coefficients, strong results can be
obtained with regard to the number and locations of the CPPs.  In
particular, for non-Dicke states it is shown that there are at most
$2n-4$ CPPs and that non-positive CPPs can only exist if the MP
distribution has a rotational symmetry around the Z-axis.
Furthermore, the CPPs can only lie at specified azimuthal angles on
the sphere, namely those that are `projected' from the meridian of
positive Bloch vectors by means of the Z-axis rotational symmetry
(see, e.g., the positive seven qubit state shown in \fig{bloch_7}).

Dicke states constitute a special case due to their continuous
azimuthal symmetry. The two Dicke states $| S_0 \rangle$ and $| S_n
\rangle$ are product states, with all their MPs and CPPs lying on the
north and the south pole, respectively. For any other Dicke state $| S_k
\rangle$ the MPs are shared between the two poles, and the CPPs form a
continuous horizontal ring with inclination $\theta = 2 \arccos
\sqrt{{n-k}/{n}}$.

\begin{lem}\label{cpp_mer}
  Let $| \psi \rangle_{\text{s}}$ be a positive symmetric state of $n$
  qubits, excluding the Dicke states.

  \begin{enumerate}
  \item[(a)] If $| \psi \rangle_{\text{s}}$ is rotationally symmetric
    around the Z-axis with minimal rotational angle $2 \pi / m$, then
    all its CPPs $| \sigma (\theta, \varphi) \rangle = \co_{\theta}
    |0\rangle + \E^{\I \varphi} \si_{\theta} |1\rangle$ are restricted
    to the $m$ azimuthal angles given by $\varphi = \varphi_{r} = 2
    \pi r / m$ with $r \in \mathbb{Z}$.  Furthermore, if $| \sigma
    (\theta, \varphi_{r} ) \rangle$ is a CPP for some $r$, then it is
    also a CPP for all other values of $r$.

  \item[(b)] If $| \psi \rangle_{\text{s}}$ is not rotationally
    symmetric around the Z-axis, then all its CPPs are positive.
  \end{enumerate}
\end{lem}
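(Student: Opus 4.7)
The plan is to analyze the overlap function $|\langle\psi|\sigma(\theta,\varphi)\rangle^{\otimes n}|$ directly in the Dicke basis. Writing $|\psi\rangle_{\text{s}} = \sum_k a_k |S_k\rangle$ with $a_k \geq 0$ and expanding $|\sigma(\theta,\varphi)\rangle^{\otimes n} = \sum_k \binom{n}{k}^{1/2}\co_\theta^{n-k}\si_\theta^k\, e^{\I k\varphi}\, |S_k\rangle$ yields $\langle\psi|\sigma\rangle^{\otimes n} = \sum_k \beta_k(\theta)\, e^{\I k\varphi}$ with non-negative weights $\beta_k(\theta) = a_k \binom{n}{k}^{1/2}\co_\theta^{n-k}\si_\theta^k$. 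The triangle inequality then gives $|\langle\psi|\sigma\rangle^{\otimes n}| \leq \sum_k \beta_k(\theta)$, a bound that is independent of $\varphi$ and that is saturated at $\varphi=0$. Consequently the optimal inclination $\theta^*$ is determined by maximizing $\sum_k \beta_k(\theta)$ alone, and at that $\theta^*$ the admissible CPP azimuths $\varphi^*$ are exactly those for which the triangle inequality is attained.

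At an interior $\theta^* \in (0,\pi)$ all $\beta_k(\theta^*)$ with $a_k \neq 0$ are strictly positive, so saturation reduces to the phase-alignment condition $e^{\I k\varphi^*} = e^{\I k'\varphi^*}$ for all $k,k' \in S := \{k : a_k \neq 0\}$. Since $|\psi\rangle_{\text{s}}$ is non-Dicke, $|S| \geq 2$, and the alignment condition is equivalent to $\varphi^* \in \frac{2\pi}{m^*}\mathbb{Z}$ where $m^* = \gcd\{k - k' : k, k' \in S\}$. By Lemma \ref{rot_symm}, $m^*$ is precisely the largest integer $m$ for which $|\psi\rangle_{\text{s}}$ admits a Z-axis rotational symmetry of angle $2\pi/m$; equivalently $2\pi/m^*$ is the minimal such angle. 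If $|\psi\rangle_{\text{s}}$ has no Z-axis rotational symmetry then $m^* = 1$ and $\varphi^* \equiv 0 \pmod{2\pi}$, so every CPP takes the positive form $\co_{\theta^*}|0\rangle + \si_{\theta^*}|1\rangle$, proving (b). For (a), with minimal rotation angle $2\pi/m$ one has $m^* = m$ and $\varphi^*$ is confined to the discrete values $\varphi_r = 2\pi r/m$. The cyclic-orbit statement of (a) then follows from $R_z^{\text{s}}(2\pi/m)|\psi\rangle_{\text{s}} = e^{\I \delta}|\psi\rangle_{\text{s}}$: rotating $|\sigma\rangle$ by $R_z(2\pi/m)$ preserves $|\langle\psi|\sigma\rangle^{\otimes n}|$ and sends a CPP at $\varphi_r$ to one at $\varphi_{r+1}$.

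The boundary inclinations $\theta^* \in \{0,\pi\}$ need only a brief separate remark: exactly one $\beta_k(\theta^*)$ survives and the CPP collapses to $|0\rangle$ or $|1\rangle$, which is positive and trivially compatible with both claims (the azimuth is immaterial when $\si_{\theta^*} = 0$). The non-Dicke hypothesis is essential to the argument: for $|S|=1$ the phase-alignment constraint is vacuous, $m^*$ is undefined, and every $\varphi$ produces a CPP, consistent with the continuous azimuthal ring of CPPs already described for Dicke states in the paragraph preceding the Lemma. The main subtle point I expect to spell out carefully in the full proof is the identification of the gcd $m^*$ with the symmetry integer $m$ of Lemma \ref{rot_symm}, which comes down to observing that the set of integers dividing every element of $\{k - k' : k, k' \in S\}$ is exactly the set of divisors of $m^*$, with $m^*$ itself as its maximum.
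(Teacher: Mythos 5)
Your proof is correct and rests on the same central observation as the paper's: expand $\langle\psi|\sigma\rangle^{\otimes n}$ in the Dicke basis, bound it by the triangle inequality with the $\varphi$-independent positive sum, and note that a CPP off the positive meridian can exist only if the inequality is saturated, which forces the phases $\E^{\I k\varphi}$ to align across the support $S=\{k: a_k\neq 0\}$. Where you genuinely diverge is in converting that alignment condition into the integrality of $r$. The paper handles case (a) by asserting that minimality of the rotation angle guarantees a pair of occupied indices with $k_i-k_j=m$ exactly, and handles case (b) by a contradiction argument involving coprime fractions $r=x/y$ and two incommensurate gaps $\alpha,\beta$. Your route instead identifies the admissible azimuths as $\tfrac{2\pi}{m^*}\mathbb{Z}$ with $m^*=\gcd\{k-k': k,k'\in S\}$ and then identifies $m^*$ with the symmetry integer of Lemma \ref{rot_symm}. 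This is cleaner and, more importantly, more watertight: the gcd of the support differences being $m$ does not by itself imply that some pair of occupied indices differs by exactly $m$ (consider a support with differences $\{4,6,10\}$, whose gcd is $2$), so the paper's step needs a B\'ezout-type combination anyway --- which is precisely what your gcd formulation supplies. Your explicit treatment of the boundary inclinations $\theta^*\in\{0,\pi\}$ and the derivation of the ``furthermore'' clause from invariance under $R_z^{\,\text{s}}(2\pi/m)$ are also correct and match what the paper leaves implicit.
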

\begin{proof}
  The proof runs similar to the one of Lemma \ref{lem_positive}, where
  the existence of at least one positive CPP is established.  We use
  the notations $| \psi \rangle_{\text{s}} = \sum_{k} a_{k} | S_{k}
  \rangle$ with $a_k \geq 0$, and $| \lambda \rangle = | \sigma
  \rangle^{\otimes n}$.

  Case (a): Consider a non-positive CPP $| \sigma \rangle =
  \co_{\theta} |0\rangle + \E^{\I \varphi} \si_{\theta} |1\rangle$
  with $\varphi = 2 \pi r / m$ and $r \in \mathbb{R}$, and define $|
  \widetilde{\sigma} \rangle = \co_{\theta} |0\rangle + \si_{\theta}
  |1\rangle$. Then $|\langle \lambda | \psi \rangle_{\text{s}} | = |
  \sum_k \E^{\I k \varphi} a_k \co_{\theta}^{n-k} \si_{\theta}^{k}
  \sqrt{n/k} | \leq \sum_k a_k \co_{\theta}^{n-k} \si_{\theta}^{k}
  \sqrt{n/k} = |\langle \widetilde{\lambda} | \psi \rangle_{\text{s}}
  |$. If this inequality is strict, then $| \sigma \rangle$ is not a
  CPP.  This would be a contradiction, so it must be an equality.
  Thus, for any two indices $k_i$ and $k_j$ of non-vanishing
  coefficients $a_{k_i}$ and $a_{k_j}$, the following must hold:
  $\E^{\I k_{i} \varphi} = \E^{\I k_{j} \varphi}$.  This can be
  reformulated as $k_{i} r \bmod m = k_{j} r$, or equivalently
  \begin{equation}\label{pos_mer_cond}
    ( k_{i} - k_{j} ) \, r \bmod m = 0 \enspace .
  \end{equation}
  Because $\varphi = 2 \pi / m$ is the minimal rotational angle, $m$
  is the largest integer that satisfies \eq{rot_cond}, and thus there
  exist $k_i$ and $k_j$ with $a_{k_i}, a_{k_j} \neq 0$ s.t. $k_i - k_j
  = m$.  From this and from \eq{pos_mer_cond}, it follows that $r \in
  \mathbb{Z}$.  Therefore $| \sigma (\theta, \varphi_{r} ) \rangle$ is
  a CPP if and only if $r$ is an integer.

  Case (b): Considering a CPP $| \sigma \rangle = \co_{\theta}
  |0\rangle + \E^{\I \rho} \si_{\theta} |1\rangle$ with $\rho = 2 \pi
  r$ and $r \in \mathbb{R}$, we need to show that $r \in \mathbb{Z}$.
  Defining $| \widetilde{\sigma} \rangle = \co_{\theta} |0\rangle +
  \si_{\theta} |1\rangle$, and using the same line of argumentation as
  above, the equation $\E^{\I k_{i} \rho} = \E^{\I k_{j} \rho}$ must
  hold for any pair of non-vanishing $a_{k_i}$ and $a_{k_j}$.  This is
  equivalent to
  \begin{equation}\label{pos_mer_cond2}
    ( k_{i} - k_{j} ) \, r \bmod 1 = 0 \enspace ,
  \end{equation}
  or $( k_{i} - k_{j} ) \, r \in \mathbb{Z}$, in particular $r \in
  \mathbb{Q}$. If there exist indices $k_{i}$ and $k_{j}$ of
  non-vanishing coefficients s.t. $k_{i} - k_{j} = 1$, then $r \in
  \mathbb{Z}$, as desired.  Otherwise consider $r = x/y$ with coprime
  $x,y \in \mathbb{N}$, $x < y$.  Because $| \psi \rangle_{\text{s}}$
  is not rotationally symmetric, the negation of Lemma \ref{rot_symm}
  yields that, for any two $k_{i}$ and $k_{j}$ ($a_{k_{i}}, a_{k_{j}}
  \neq 0$) with $k_{i} - k_{j} = \alpha > 1$, there must exist a
  different pair $k_{p}$ and $k_{q}$ ($a_{k_p}, a_{k_q} \neq 0$) with
  $k_{p} - k_{q} = \beta > 1$ s.t.  $\alpha$ is not a multiple of
  $\beta$ and vice versa.  From $r = x/y$ and \eq{pos_mer_cond2}, it
  follows that $y = \alpha$ as well as $y = \beta$. This is a
  contradiction, so $r \in \mathbb{Z}$.
\end{proof}

With this result about the confinement of the CPPs to certain
azimuthal angles, it is possible to derive upper bounds on the number
of CPPs.

\begin{theo}\label{maj_max_pos_zero}
  The Majorana representation of every positive symmetric state $|
  \psi \rangle_{\text{s}}$ of $n$ qubits, excluding the Dicke states,
  belongs to one of the following three mutually exclusive classes.

  \begin{enumerate}
  \item[(a)] $| \psi \rangle_{\text{s}}$ is rotationally symmetric
    around the Z-axis, with only the two poles as possible CPPs.

  \item[(b)] $| \psi \rangle_{\text{s}}$ is rotationally symmetric
    around the Z-axis, with at least one CPP being non-positive.

  \item[(c)] $| \psi \rangle_{\text{s}}$ is not rotationally symmetric
    around the Z-axis, and all CPPs are positive.
  \end{enumerate}
  Regarding the CPPs of states from class (b) and (c), the following
  assertions can be made for $n \geq 3$:
  \begin{enumerate}
  \item[(b)] If both poles are occupied by at least one MP each, then
    there are at most $2n-4$ CPPs, else there are at most $n$ CPPs.

  \item[(c)] There are at most $\lceil \tfra{n+2}{2} \rceil$ CPPs
    \footnote{The ceiling function ${\lceil} x {\rceil}$ is the
      smallest integer not less than $x$.}.
  \end{enumerate}
\end{theo}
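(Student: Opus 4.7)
The plan is to separate the three-way classification from the two cardinality bounds, with both bounds reduced to counting critical points of the one-variable overlap $F(u) = q(u)/(1+u^2)^{n/2}$, where $q(u) = \sum_k a_k \sqrt{\binom{n}{k}}\, u^k$ and $u = \tan(\theta/2) \in [0,\infty]$ parametrizes the positive meridian. The classification is immediate: Lemma~\ref{lem_positive_symmetric} guarantees at least one positive CPP; Lemma~\ref{cpp_mer}(b) puts any non-rotationally-symmetric $|\psi\rangle_{\text{s}}$ in case~(c); when instead the state has minimal rotational angle $2\pi/m$, Lemma~\ref{cpp_mer}(a) confines CPPs to $m$ meridians and forces non-pole CPPs into full orbits of size $m$, so either every CPP sits at a pole (case~(a)) or an orbit of $m \geq 2$ members contains $m-1$ non-positive points (case~(b)).

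\textbf{Case~(c) bound.} Critical points of $F$ are the positive real zeros of $P(u) := q'(u)(1+u^2) - nu\,q(u)$; a short expansion shows the $u^{n+1}$ coefficient cancels, so $\deg P \leq n$ and $F$ has at most $n$ interior critical points on $(0,\infty)$. Along the one-dimensional meridian local maxima and minima alternate, and at most two further extrema sit at the polar boundaries, so the number of local maxima -- and hence of CPPs -- is at most $\lfloor (n+3)/2\rfloor = \lceil (n+2)/2 \rceil$.

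\textbf{Case~(b) bound.} Write $\#\text{CPPs} = mp + \rho$ with $p$ positive non-pole CPPs and $\rho \in \{0,1,2\}$ pole CPPs. The rotational symmetry factorizes $q(u) = u^l \tilde Q(u^m)$ with $\tilde Q$ a positive-coefficient polynomial of degree $s$, so $P(u) = u^{l-1} \tilde P(u)$. When both poles are occupied, $a_0 = a_n = 0$ forces $l \geq 1$ and $d := l+ms \leq n-1$; one checks that the coefficients of $\tilde P$ are positive at $u^{jm}$ (namely $(l+mj)b_j > 0$) and strictly negative at $u^{jm+2}$ (namely $(l+mj-n)b_j < 0$, since $l+mj \leq d \leq n-1$). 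For $m \geq 3$ these exponents are distinct and the sign pattern strictly alternates, so Descartes' rule caps the positive roots of $\tilde P$ at $2s+1$; combined with the meridian alternation -- both boundaries are strict minima because $F(0)=F(\infty)=0$ -- this gives $p \leq s+1$, whence
\[
\#\text{CPPs} = m p \leq m(s+1) \leq ms + m \leq (n-2) + (n-2) = 2n-4,
\]
saturated at $m = n-2$, $s = 1$. For $m=2$ the exponents $jm$ and $j'm+2$ overlap, but $h(u) := q(u)^2 - G^2(1+u^2)^n$ is even in $u$, so each positive CPP $u_0$ is a double root of $h$ at both $\pm u_0$, giving $p \leq n/2$ and hence $2p \leq n \leq 2n-4$ for $n \geq 4$. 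When instead not both poles are occupied, $\rho$ can contribute and $q$ loses the $u^l$ factor or the $d \leq n-1$ restriction; a parallel analysis (with boundary maxima allowed where $a_0$ or $a_n$ is nonzero) yields the uniform bound $n$.

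\textbf{Main obstacle.} The technical core is case~(b) with both poles occupied: the bound $2n-4$ arises from the optimization $\max_{ms \leq n-2} m(s+1) = 2(n-2)$, saturated at $m = n-2$, $s = 1$, and requires orchestrating simultaneously the $u^{l-1}$ factor of $P$, Descartes' rule on the alternating sign pattern of $\tilde P$ for $m \geq 3$, and the separate evenness trick for $m = 2$. By comparison, the classification and the case~(c) bound are routine.
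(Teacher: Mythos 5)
Your proposal is correct and follows essentially the same route as the paper's proof (Appendix~\ref{max_cpp_number}): the classification is read off from Lemmas~\ref{lem_positive_symmetric} and~\ref{cpp_mer} exactly as in the text, and both cardinality bounds are obtained by substituting $x=\tan(\theta/2)$, applying Descartes' rule of signs to the derivative of the overlap along the positive meridian (whose exponents are spaced by $m$ because of the rotational symmetry), and multiplying the resulting count of positive local maxima by the orbit size $m$, ending with the same optimization $m(s+1)\le 2n-4$ under $ms\le n-2$. The only real divergences are minor: you derive the polynomial structure directly from the coefficient condition of Lemma~\ref{rot_symm} rather than from the paper's decomposition of the MPs into polar points and basic/intertwined horizontal circles, and for $m=2$, where the paper still applies Descartes to the degenerate sign pattern, you instead count double roots of the even polynomial $q(u)^2-G^2(1+u^2)^n$; both yield a bound of at most $n\le 2n-4$ in that subcase.
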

\begin{proof}
  Starting with the first part of the theorem, case (c) has already
  been shown in Lemma \ref{cpp_mer}, so consider states $| \psi
  \rangle_{\text{s}}$ with a rotational symmetry $\varphi = 2 \pi / m$
  around the Z-axis.  If all CPPs are either $\vert 0 \rangle$ or
  $\vert 1 \rangle$, then we have case (a), otherwise there is at
  least one CPP $| \sigma \rangle$ which does not lie on a pole.  If
  this $| \sigma \rangle$ is non-positive, then we have case (b), and
  if $| \sigma \rangle$ is positive, then Lemma \ref{cpp_mer} states
  the existence of another, non-positive CPP, thus again resulting in
  case (b).

  The proof of the second part of the theorem is a bit involved and
  can be found in \app{max_cpp_number}.
\end{proof}

\section{Numerical solutions for up to twelve qubits}
\label{maximally_entangled_symmetric_states}

In this section, we present the numerically determined solutions of
the Majorana problem for up to 12 qubits.  In order to find these, the
results from the previous sections were extremely helpful.  This is
because an exhaustive search over the set of all symmetric states
quickly becomes unfeasible, even for low numbers of qubits and because
the min-max problem \eqsimple{maj_problem} is too complex to allow for
straightforward analytic solutions.

Among the results particularly helpful for our search are Lemma
\ref{cpp_mer} and Theorem \ref{maj_max_pos_zero}. For positive states
they strongly restrict the possible locations of CPPs, and thus
greatly simplify the calculation of the entanglement.  It then
suffices to determine only the positive CPPs because all other CPPs
automatically follow from the Z-axis symmetry (if any exists).  We
will see that this result is especially powerful for the Platonic
solids in the cases $n=4,6$ where the location of the CPPs can be
immediately determined from this argument alone, without the need to
do any calculations.

From the definition \eqsimple{geo_meas} of $E_{\text{G}}$, it is clear
that the exact amount of entanglement of a given state (or its
corresponding MP distribution) is automatically known once the
location of at least one CPP is known. A numerical search over the set
of positive symmetric states often detects the maximally entangled
state to be of a particularly simple form, enabling us to express the
exact positions of its MPs and CPPs analytically. In some cases,
however, no analytical expressions were found for the positions of the
CPPs and/or MPs. In these cases the exact amount of entanglement
remains unknown, although it can be numerically approximated with high
precision.

In this way we can be quite confident of finding the maximally
entangled positive symmetric state.  In the general symmetric case we
do not have as many tools, so the search is over a far bigger set of
possible states, and we can be less confident in our results. We
therefore focus our search to sets of states that promised high
entanglement. Such states include those with highly spread out MP
distributions, particularly the solutions of the classical
optimization problems.  From these results we can propose candidates
for maximal symmetric entanglement.

For two and three qubits, the maximally entangled symmetric states are
known and were discussed in \sect{majorana_representation}, so we
start with four qubits.  A table summarizing the amount of
entanglement of the numerically determined maximally entangled
positive symmetric as well as the entanglement of the candidates for
the general symmetric case can be found in \app{entanglement_table}.

\subsection{Four qubits}\label{majorana_four}

For four points, both T\'{o}th's and Thomson's problem are solved by
the vertices of the regular tetrahedron \cite{Whyte52}. The numerical
search for the maximally entangled symmetric state returns the
Platonic solid too.  Recast as MPs, the vertices are
\begin{equation}\label{4_mp}
  \begin{split}
    | \phi_{1} \rangle & = | 0 \rangle \enspace , \\
    | \phi_{2} \rangle & = \tfra{1}{\sqrt{3}} | 0 \rangle +
    \sqrt{\tfra{2}{3}} | 1 \rangle \enspace , \\
    | \phi_{3} \rangle & = \tfra{1}{\sqrt{3}} | 0 \rangle +
    \E^{\I 2 \pi / 3} \sqrt{\tfra{2}{3}} | 1 \rangle \enspace , \\
    | \phi_{4} \rangle & = \tfra{1}{\sqrt{3}} | 0 \rangle +
    \E^{\I 4 \pi / 3} \sqrt{\tfra{2}{3}} | 1 \rangle \enspace .
  \end{split}
\end{equation}
The symmetric state constructed from these MPs by means of
\eq{majorana_definition} shall be referred to as the `tetrahedron
state'.  Its form is $| \Psi_{4} \rangle = \sqrt{1/3} \, | S_{0}
\rangle + \sqrt{2/3} \, | S_{3} \rangle$, and its MP distribution is
shown in \fig{bloch_4}.  Because the state is positive and has a
rotational symmetry around the Z-axis, Lemma \ref{cpp_mer} restricts
the possible CPP locations to the three half-circles $| \sigma
(\theta, \varphi ) \rangle$ with $\varphi = 0, 2 \pi / 3, 4 \pi / 3$.
\begin{figure}[b]
  \begin{center}
    \begin{overpic}[scale=.5]{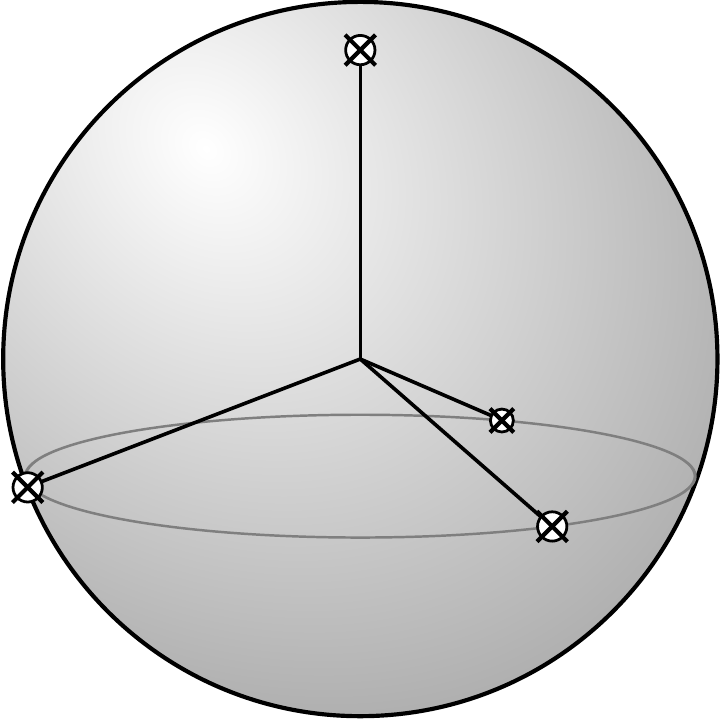}
      \put(30,86){$|  \phi_1 \rangle$}
      \put(-11,21){$| \phi_2 \rangle$}
      \put(66,16){$|  \phi_3 \rangle$}
      \put(68,48){$|  \phi_4 \rangle$}
    \end{overpic}
  \end{center}
  \caption{\label{bloch_4} MPs and CPPs of the `tetrahedron state'.}
\end{figure}
From the symmetry of the Platonic solid it is clear that the MP
distribution of \fig{bloch_4} can be rotated s.t. $| \phi_{2}
\rangle$, $| \phi_{3} \rangle$ or $| \phi_{4} \rangle$ is moved to the
north pole, with the actual distribution (and thus $| \Psi_{4}
\rangle$) remaining unchanged.  Each of these rotations, however,
gives rise to new restrictions on the location of the CPPs mediated by
Lemma \ref{cpp_mer}. The intersections of all these restrictions are
the four points where the MPs lie.  This yields the result that $|
\Psi_{4} \rangle$ has four CPPs, with their Bloch vectors being the
same as those in \eq{4_mp}.  From this the amount of entanglement
follows as $E_{\text{G}} ( | \Psi_{4} \rangle ) = \log_2 3 \approx
1.59$.

\subsection{Five qubits}\label{majorana_five}

For five points, the solution to Thomson's problem is given by three
of the charges lying on the vertices of an equatorial triangle and the
other two lying at the poles \cite{Ashby86,Marx70}.  This is also a
solution to T\'{o}th's problem, but it is not unique
\cite{Ogilvy51,Schutte51}.  The corresponding quantum state, the
`trigonal bipyramid state', is shown in \fig{bloch_5}(a).  This state
has the form $| \psi_{5} \rangle = 1 / \sqrt{2} ( | S_{1} \rangle + |
S_{4} \rangle )$, and a simple calculation yields that it has three
CPPs that coincide with the equatorial MPs, giving an entanglement of
$E_{\text{G}} ( | \psi_{5} \rangle ) = \log_2 ( 16 / 5 ) \approx
1.68$.

\begin{figure}[hb]
  \begin{center}
    \begin{overpic}[scale=.5]{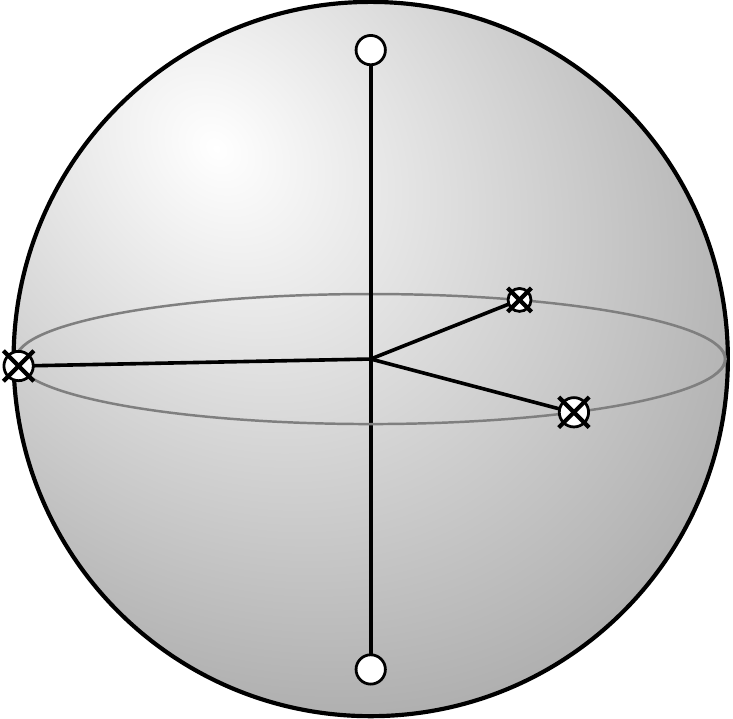}
      \put(-7,0){(a)}
      \put(31,83){$| \phi_1 \rangle$}
      \put(-18,49){$| \phi_2 \rangle$}
      \put(74,30){$| \phi_3 \rangle$}
      \put(67,63){$| \phi_4 \rangle$}
      \put(31,10){$| \phi_5 \rangle$}
    \end{overpic}
    \hspace{5mm}
    \begin{overpic}[scale=.5]{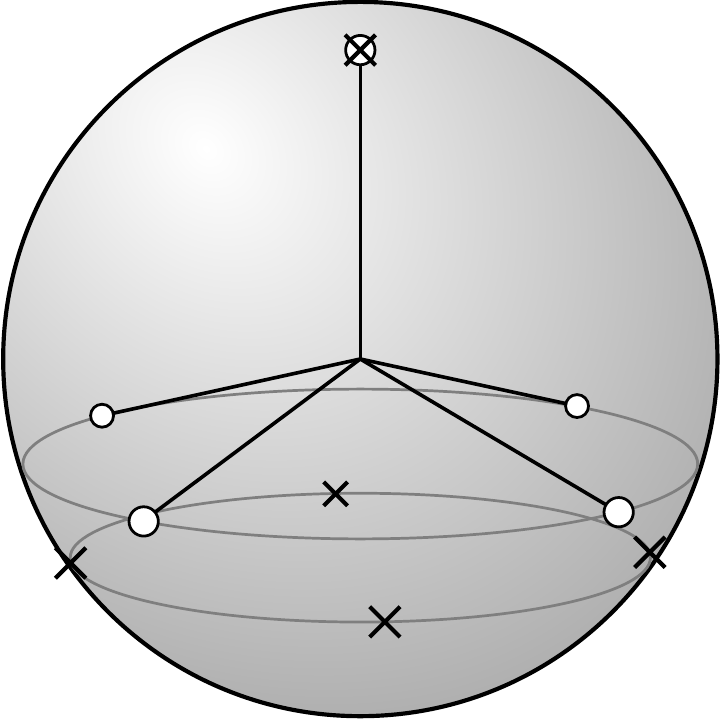}
      \put(-7,0){(b)}
      \put(31,85){$| \phi_1 \rangle$}
      \put(20,16){$| \phi_2 \rangle$}
      \put(64,20){ $| \phi_3 \rangle$}
      \put(74,49){$| \phi_4 \rangle$}
      \put(4,48){$| \phi_5 \rangle$}
      \put(-10,17){$| \sigma_1 \rangle$}
    \end{overpic}
  \end{center}
  \caption{\label{bloch_5} The distribution (a) shows the `trigonal
    bipyramid state', but the conjectured solution of the Majorana
    problem is the `square pyramid state', shown in (b).}
\end{figure}

However, a numerical search among all positive symmetric states yields
states with higher entanglement.  Our numerics indicate that the
maximally entangled state is the `square pyramid state' shown in
\fig{bloch_5}(b).  This state has five CPPs, one on the north pole and
the other four lying in a horizontal plane slightly below the plane
with the MPs.  The form of this state is
\begin{equation}\label{5_opt_form}
  | \Psi_{5} \rangle = \tfra{1}{\sqrt{1 + A^2}} | S_{0} \rangle +
  \tfra{A}{\sqrt{1 + A^2}} | S_{4} \rangle \enspace .
\end{equation}
Its MPs are
\begin{equation}\label{5_opt_maj}
  \begin{split}
    | \phi_{1} \rangle & = | 0 \rangle \enspace , \\
    | \phi_{2,3,4,5} \rangle & = \alpha | 0 \rangle + \E^{\I \kappa}
    \sqrt{1 - \alpha^2} | 1 \rangle \enspace ,
  \end{split}
\end{equation}
with $\kappa = \tfra{\pi}{4}, \tfra{3 \pi}{4}, \tfra{5 \pi}{4},
\tfra{7 \pi}{4}$, and the CPPs are
\begin{equation}\label{5_opt_cpp}
  \begin{split}
    | \sigma_{1} \rangle & = | 0 \rangle \enspace , \\
    | \sigma_{2,3,4,5} \rangle & = x | 0 \rangle + k \sqrt{1 - x^2}
    | 1 \rangle \enspace ,
  \end{split}
\end{equation}
with $k = 1, \I, -1, -\I$. The exact values can be determined
analytically by solving quartic equations.  The $x \in (0,1)$ of
\eq{5_opt_cpp} is given by the real root of $4 x^4 + 4 x^3 + 4 x^2 - x
- 1 = 0$, and this can be used to calculate $A = ( 1 - x^5 ) / (
\sqrt{ 5 } x (1 - x^2)^2 )$.  With the substitution $a = \alpha^2
\in (0,1)$ the value of $\alpha$ is given by the real root of $(5 A^2
- 1) a^4 + 4 a^3 - 6 a^2 + 4 a - 1 = 0$.  Approximate values of these
quantities are:
\begin{equation}\label{5_opt_form_approx}
  x \approx 0.46657 \enspace , \quad  A \approx 1.53154 \enspace ,
  \quad \alpha \approx 0.59229 \enspace .
\end{equation}
The entanglement is $E_{\text{G}} ( | \Psi_{5} \rangle ) = \log_2 ( 1
+ A^2 ) \approx 1.74$, which is considerably higher than that of
$E_{\text{G}} ( | \psi_{5} \rangle )$. We remark that the `center of
mass' of the five MPs of $| \Psi_{5} \rangle$ does not coincide with
the sphere's origin, thus ruling out the corresponding spin-$5/2$
state as an anticoherent spin state, as defined in \cite{Zimba06}.

\subsection{Six qubits}\label{majorana_six}

The regular octahedron, a Platonic solid, is the unique solution to
T\'{o}th's and Thomson's problem for six points.  The corresponding
`octahedron state' was numerically confirmed to solve the Majorana
problem for six qubits.

\begin{figure}[hb]
  \begin{center}
    \begin{overpic}[scale=.5]{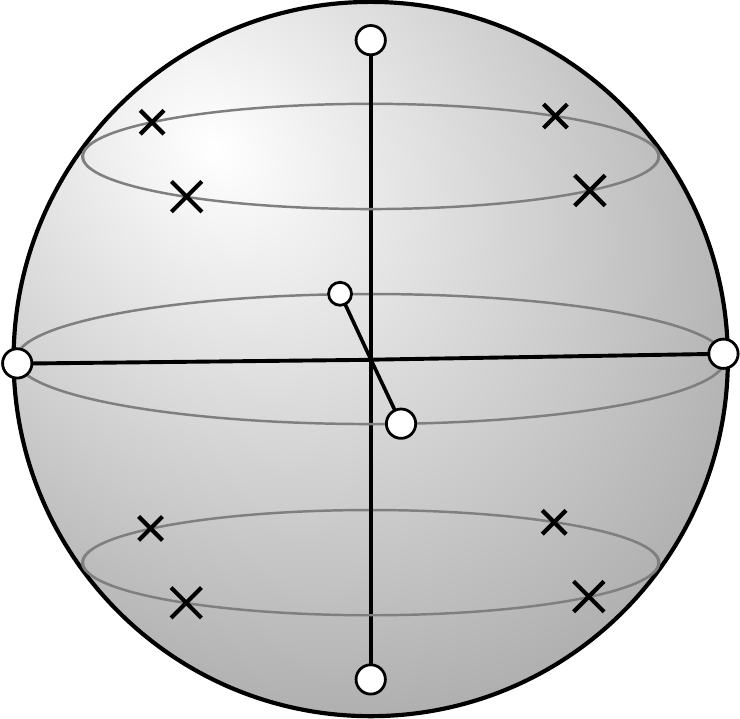}
      \put(-7.5,0){(a)}
      \put(32,84){$| \phi_1 \rangle$}
      \put(32,9){$| \phi_2 \rangle$}
      \put(-17,48){$| \phi_3 \rangle$}
      \put(57,31){$| \phi_4 \rangle$}
      \put(79,56){$| \phi_5 \rangle$}
      \put(27,56){$| \phi_6 \rangle$}
    \end{overpic}
    \hspace{5mm}
    \begin{overpic}[scale=.5]{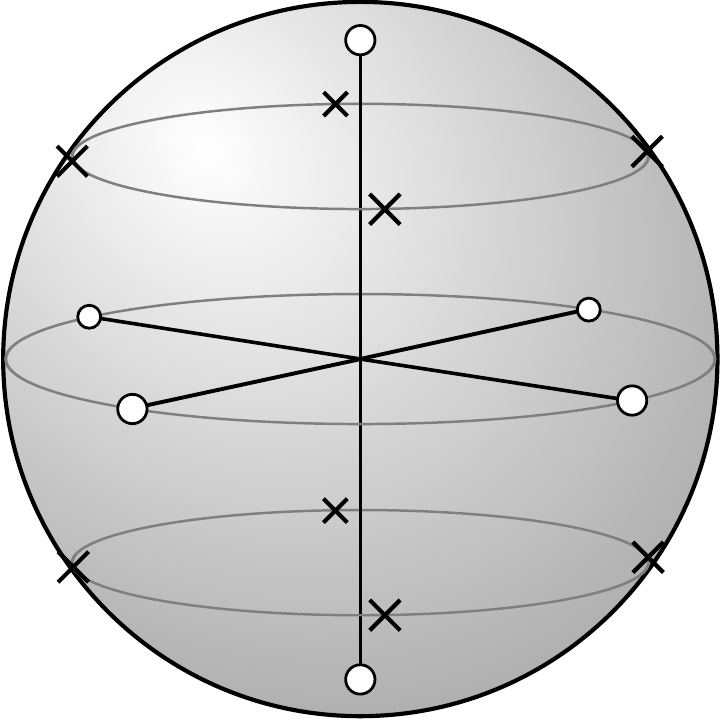}
      \put(-7.5,0){(b)}
      \put(32,89){$| \phi_1 \rangle$}
      \put(31,8){$| \phi_2 \rangle$}
      \put(11,32){$| \phi_3 \rangle$}
      \put(78,33){$| \phi_4 \rangle$}
      \put(77,62){$| \phi_5 \rangle$}
      \put(7,62){$| \phi_6 \rangle$}
    \end{overpic}
  \end{center}
  \caption{\label{bloch_6_1} Two possible orientations of the
    `octahedron state'.}
\end{figure}

The straightforward orientation shown in \fig{bloch_6_1}(a) has the
form $| \Psi_{6} ' \rangle = 1/{\sqrt{2}} ( | S_{1} \rangle - | S_{5}
\rangle )$, and its MPs are
\begin{equation}\label{6_alpha1_maj}
  \begin{split}
    & | \phi_{1} \rangle = | 0 \rangle \, , \quad
    | \phi_{2} \rangle = | 1 \rangle \enspace , \\
    & | \phi_{3,4,5,6} \rangle = \tfra{1}{\sqrt{2}} \big( | 0 \rangle
    + k | 1 \rangle \big) \enspace ,
  \end{split}
\end{equation}
with $k = 1, \I, -1, -\I$.  $| \Psi_{6} ' \rangle$ can be turned into
the positive state $| \Psi_{6} \rangle = 1 / \sqrt{2} ( | S_{1}
\rangle + | S_{5} \rangle )$ by means of an $R^{\, \text{s}}_z (\pi /
4)$ rotation.  The CPPs can be obtained from this state in the same
way as for the tetrahedron state. Being a Platonic solid, the MP
distribution of \fig{bloch_6_1}(b) is left invariant under a finite
subgroup of rotation operations on the sphere.  From Lemma
\ref{cpp_mer}, the intersection of the permissible locations of the
CPPs is found to be the eight points lying at the center of each face
of the octahedron, forming a cube inside the Majorana sphere.
\begin{equation}\label{6_sigma_2}
  \begin{split}
    | \sigma_{1,2,3,4} \rangle & = \sqrt{ \tfra{\sqrt{3}+1}{2
        \sqrt{3}}} \, | 0 \rangle + k \sqrt{ \tfra{\sqrt{3}-1}{2
        \sqrt{3}}} \, | 1 \rangle \enspace , \\
    | \sigma_{5,6,7,8} \rangle & = \sqrt{ \tfra{\sqrt{3}-1}{2
        \sqrt{3}}} \, | 0 \rangle + k \sqrt{ \tfra{\sqrt{3}+1}{2
        \sqrt{3}}} \, | 1 \rangle \enspace ,
  \end{split}
\end{equation}
with $k = 1,\I,-1,-\I$.  In contrast to the tetrahedron state, where
the MPs and CPPs overlap, the CPPs of the octahedron state lie as far
away from the MPs as possible.  This is plausible, because in the case
of the octahedron \eq{maj_problem} is zero at the location of any MP,
due to the MPs forming diametrically opposite pairs.  The amount of
entanglement is $E_{\text{G}} ( | \Psi_{6} \rangle ) = \log_2 (9/2)
\approx 2.17$.

\subsection{Seven qubits}\label{majorana_seven}

For seven points, the solutions to the two classical problems become
fundamentally different for the first time.  T\'{o}th's problem is
solved by two triangles asymmetrically positioned about the equator
and the remaining point at the north pole \cite{Erber91}, or (1-3-3)
in the F{\"o}ppl notation \cite{Whyte52}.  Thomson's problem is solved
by the vertices of a pentagonal dipyramid
\cite{Ashby86,Marx70,Erber91}, where five points lie on an equatorial
pentagon and the other two on the poles. The latter is also
numerically found to be the solution to the Majorana problem.

\begin{figure}[ht]
  \begin{center}
    \begin{overpic}[scale=.5]{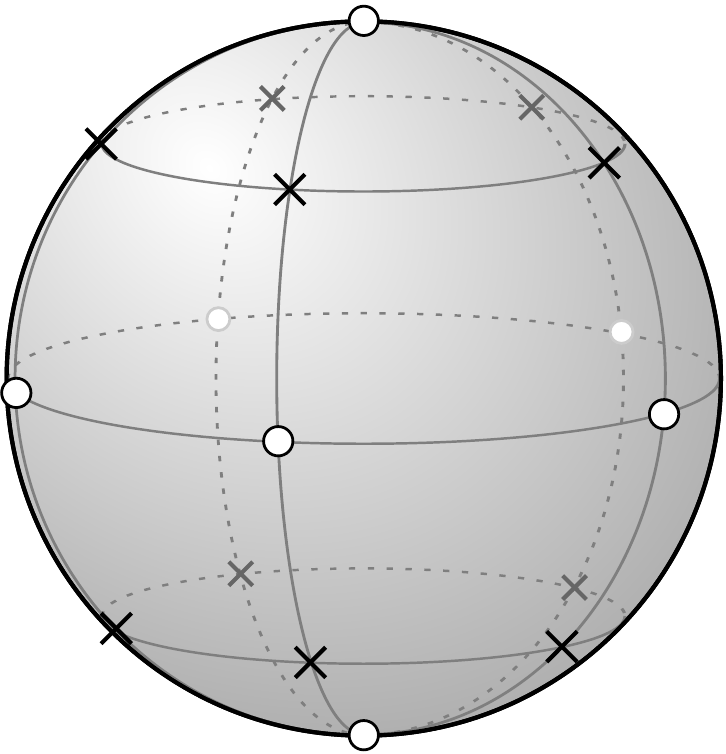}
      \put(38,103){$| \phi_1 \rangle$}
      \put(34,-8){$| \phi_2 \rangle$}
      \put(-16,50){$| \phi_3 \rangle$}
      \put(20,33){$| \phi_4 \rangle$}
      \put(77,35){$| \phi_5 \rangle$}
      \put(70,62){$| \phi_6 \rangle$}
      \put(13,62){$| \phi_7 \rangle$}
      \put(2,88){$| \sigma_1 \rangle$}
      \put(3,6){$| \sigma_2 \rangle$}
    \end{overpic}
  \end{center}
  \caption{\label{bloch_7} MPs and CPPs of the `pentagonal dipyramid
    state'. The ten CPPs are equidistantly located on two circles
    above and below the equator.}
\end{figure}

The `pentagonal dipyramid state', shown in \fig{bloch_7}, has the form
$| \Psi_{7} \rangle = 1/{\sqrt{2}} ( | S_{1} \rangle + | S_{6} \rangle
)$, and its MPs are
\begin{equation}\label{7_maj}
  \begin{split}
    & | \phi_{1} \rangle = | 0 \rangle \, , \quad
    | \phi_{2} \rangle = | 1 \rangle \enspace , \\
    & | \phi_{3,4,5,6,7} \rangle = \tfra{1}{\sqrt{2}} \big( | 0
    \rangle + \E^{\I \kappa} | 1 \rangle \big) \enspace ,
  \end{split}
\end{equation}
with $\kappa = 0, \tfra{2 \pi}{5}, \tfra{4 \pi}{5}, \tfra{6 \pi}{5},
\tfra{8 \pi}{5}$.  The CPPs of this positive state can be determined
analytically by choosing a suitable parametrization.  With $x :=
\cos^2 \theta$ the position of $| \sigma_1 \rangle = \co_{\theta} | 0
\rangle + \si_{\theta} | 1 \rangle$ and $| \sigma_2 \rangle =
\si_{\theta} | 0 \rangle + \co_{\theta} | 1 \rangle$ is determined by
the real root of the cubic equation $49 x^3 + 165 x^2 - 205 x + 55 =
0$ in the interval $[0,\tfra{1}{2}]$.  The approximate amount of
entanglement is $E_{\text{G}} ( | \Psi_{7} \rangle ) \approx 2.299$.

\subsection{Eight qubits}\label{majorana_eight}

For eight points, T\'{o}th's problem is solved by the cubic antiprism,
a cube with one face rotated around 45$^\circ$ and where the distances
between neighboring vertices are all the same. The solution to
Thomson's problem is obtained by stretching this cubic antiprism along
the Z-axis, thereby introducing two different nearest-neighbor
distances and further lowering symmetry \cite{Marx70,Ashby86,Erber91}.
One would expect that a similar configuration solves the Majorana
problem too, but, surprisingly, this is not the case. The `asymmetric
pentagonal dipyramid' shown in \fig{bloch_8}(b) is numerically found
to have the highest amount of entanglement. An analytic form of this
positive state is not known, but it can be numerically approximated to
a very high precision.  The state is $| \Psi_{8} \rangle \approx 0.672
| S_{1} \rangle + 0.741 | S_{6} \rangle$, and its entanglement is
$E_{\text{G}} ( | \Psi_{8} \rangle ) \approx 2.45$.  For comparison,
the regular cube yields $E_{\text{G}} ( | \psi_{\text{cube}} \rangle )
= \log_2 (24/5) \approx 2.26$.  Interestingly, the positive state $|
\Psi_{8} \rangle$ has a higher amount of entanglement than any state
of the antiprism form, all of which are non-positive.  Furthermore,
two of the MPs of $| \Psi_{8} \rangle$ coincide, akin to the W state
of three qubits.

\begin{figure}[ht]
  \begin{center}
    \begin{overpic}[scale=.5]{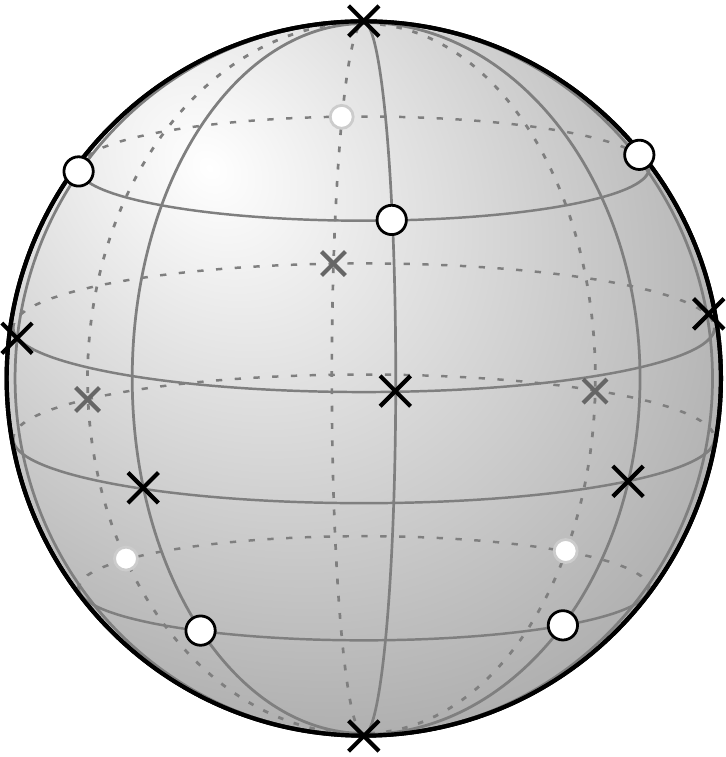}
      \put(-7.5,0){(a)}
    \end{overpic}
    \hspace{5mm}
    \begin{overpic}[scale=.5]{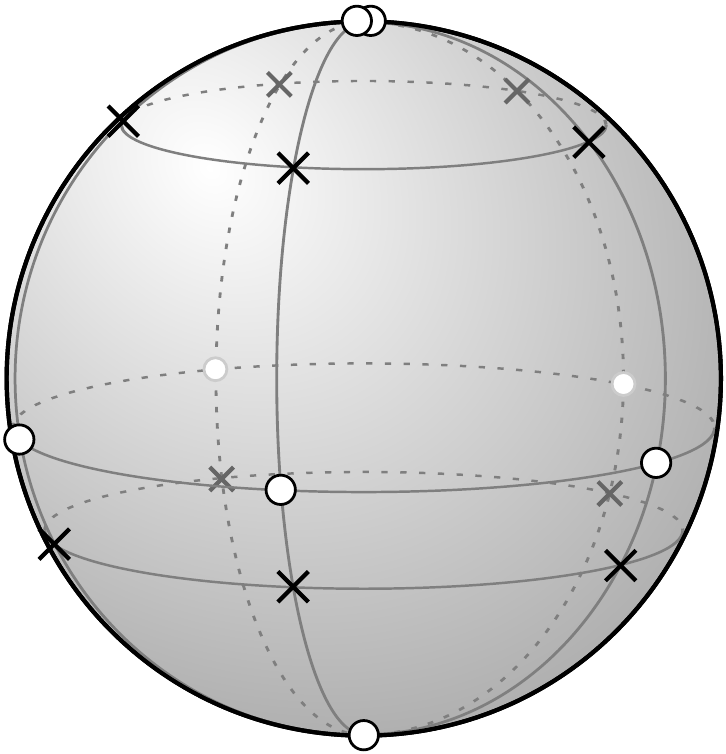}
      \put(-7.5,0){(b)}
    \end{overpic}
  \end{center}
  \caption{\label{bloch_8} The maximally entangled antiprism state is
    shown in (a), while the `asymmetric pentagonal dipyramid state' in
    (b) is conjectured to be the maximally entangled state.}
\end{figure}

$| \Psi_{8} \rangle$ is positive, with two MPs lying on the north
pole, one on the south pole and the other five on a circle below the
equator.  The exact inclination of this circle as well as the
inclination of the two circles with the CPPs is not known, but can be
approximated numerically.

\subsection{Nine qubits}\label{majorana_nine}

For nine points, the solutions to T\'{o}th's and Thomson's problems
are slightly different manifestations of the same geometric form
(3-3-3) of neighboring triangles being asymmetrically positioned.
This is also known as a triaugmented triangular prism.  In contrast to
this, the Majorana problem is numerically solved by $| \Psi_{9}
\rangle = 1 / \sqrt{2} ( | S_{2} \rangle + | S_{7} \rangle )$, shown
in \fig{bloch_9}.  This is again a positive state with a rotational
Z-axis symmetry and with coinciding MPs.

\begin{figure}
  \begin{center}
    \begin{overpic}[scale=.5]{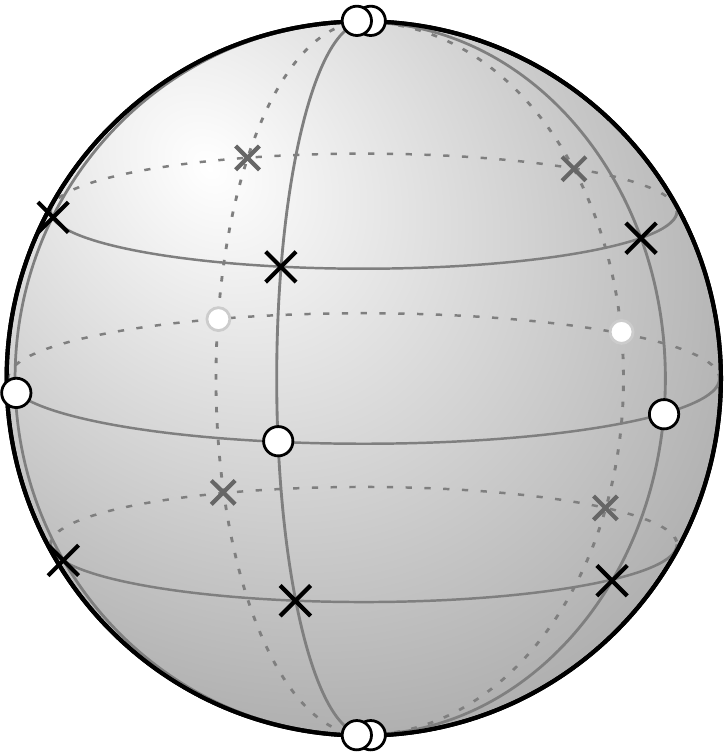}
    \end{overpic}
  \end{center}
  \caption{\label{bloch_9} The `pentagonal dipyramid state' with both
    of the poles occupied by two MPs.}
\end{figure}

The CPPs can be found analytically in the same way as for seven
qubits.  With the substitution $x := \cos^2 \theta$, the inclination
$\theta$ of the CPPs in the northern hemisphere (or $\pi - \theta$ for
the CPPs in the southern hemisphere) follows from the real root of $81
x^3 + 385 x^2 - 245 x + 35 = 0$ in the interval $[0,0.3]$.  The
approximate amount of entanglement is $E_{\text{G}} ( | \Psi_{9}
\rangle ) \approx 2.554$.

\subsection{Ten qubits}\label{majorana_ten}

The solution to T\'{o}th's problem is an arrangement of the form
(2-2-4-2), while Thomson's problem is solved by the gyroelongated
square bipyramid, a deltahedron that arises from a cubic antiprism by
placing square pyramids on each of the two square surfaces.

The ten-qubit case is distinct in two respects. It is the first case
where the numerically determined positive solution is not rotationally
symmetric around any axis.  Furthermore, we found non-positive states
with higher entanglement than the conjectured solution for positive
states.

A numerical search returns a state of the form $| \Psi_{10} \rangle =
\alpha | S_{0} \rangle + \beta | S_{4} \rangle + \gamma | S_{9}
\rangle$ as the positive state with the highest entanglement, namely
$E_{\text{G}} ( | \Psi_{10} \rangle ) \approx 2.6798$.  From Lemma
\ref{rot_symm} it is clear that this state is not rotationally
symmetric around the Z-axis.  The MP distribution is shown in
\fig{bloch_10}(a). The state has only three CPPs, which are all
positive (cf. Theorem \ref{maj_max_pos_zero}), but there are six other
local maxima of $g( \sigma )$ with values close to the CPPs.  Their
positions are shown by dashed crosses in \fig{bloch_10}(a).  While the
total MP distribution is not rotationally symmetric around the Z-axis,
one would expect from the numerical results that the MPs form two
horizontal planes, one with five MPs and another with four MPs, with
equidistantly spread out MPs. However, this is not the case, as the
locations of the MPs deviate by small, but significant, amounts from
this simple form.

\begin{figure}[hb]
    \begin{center}
      \begin{overpic}[scale=.35]{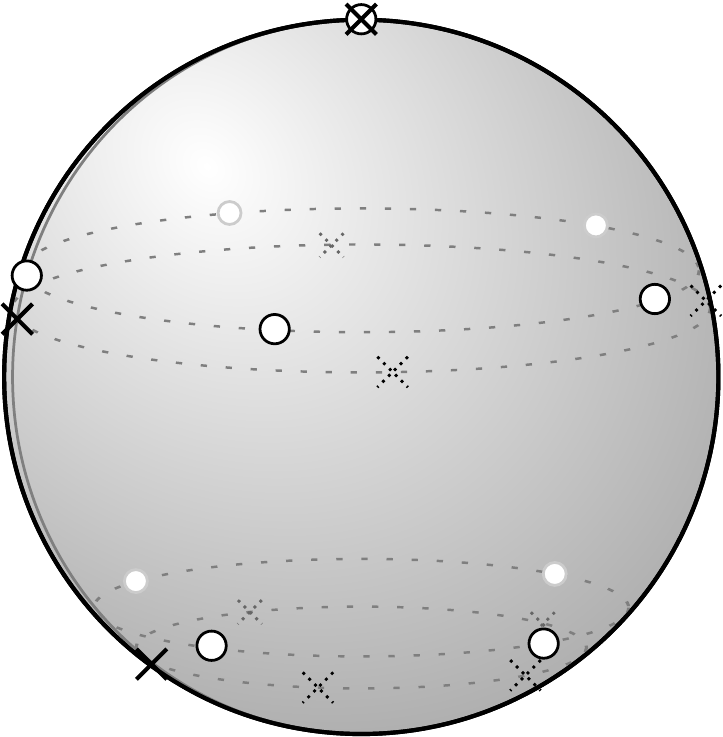}
        \put(-7.5,0){(a)}
      \end{overpic}
      \hspace{2mm}
      \begin{overpic}[scale=.35]{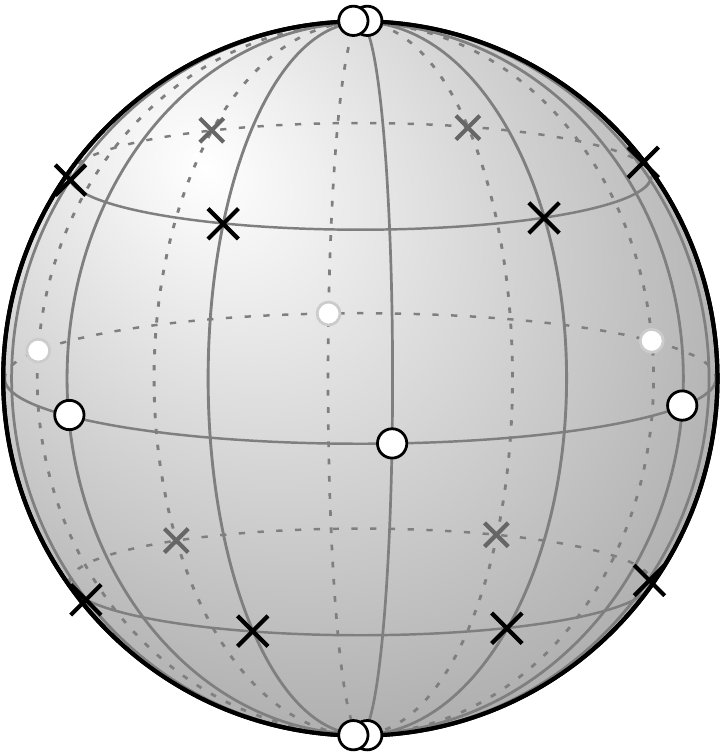}
        \put(-7.5,0){(b)}
      \end{overpic}
      \hspace{2mm}
      \begin{overpic}[scale=.35]{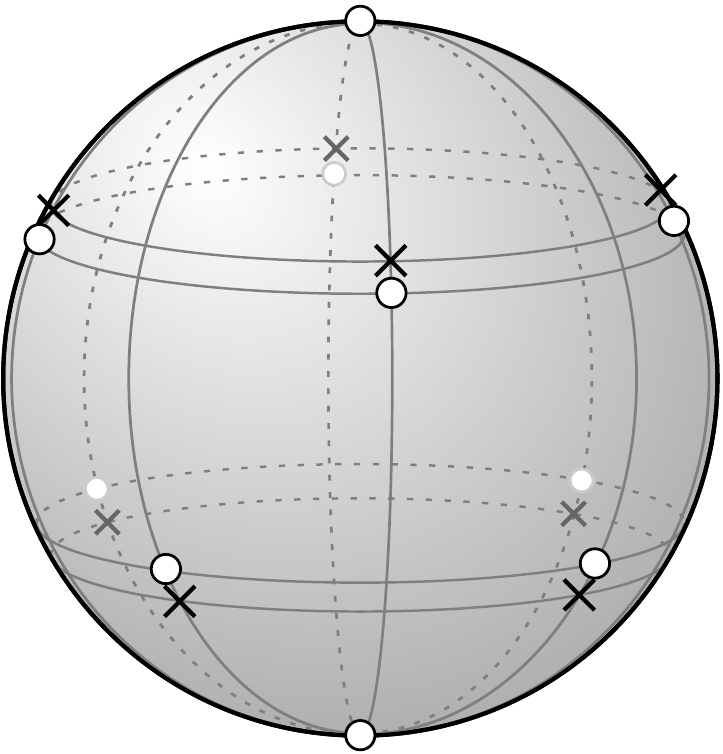}
        \put(-7.5,0){(c)}
      \end{overpic}
    \end{center}
    \caption{\label{bloch_10} The numerically determined maximally
      entangled positive state is shown in (a). A similarly highly
      entangled positive state with a rotational symmetry is shown in
      (b). The candidate for the general case is shown in (c).}
\end{figure}

Interestingly, there is a fully rotationally symmetric positive state
that comes very close to $| \Psi_{10} \rangle$ in terms of
entanglement. Its straightforward form is $| \psi_{10} \rangle = 1 /
\sqrt{2} ( | S_{2} \rangle + | S_{8} \rangle )$, as displayed in
\fig{bloch_10}(b).  The 12 CPPs of this state are easily found as the
solutions of a quadratic equation. The two positive CPPs are
\begin{equation}\label{10_sigma}
  \begin{split}
    | \sigma_{1} \rangle & = \tfra{1}{\sqrt{3 - \sqrt{3}}} \, | 0
    \rangle + \tfra{1}{\sqrt{3 + \sqrt{3}}} \, | 1 \rangle \enspace , \\
    | \sigma_{2} \rangle & = \tfra{1}{\sqrt{3 + \sqrt{3}}} \, | 0
    \rangle + \tfra{1}{\sqrt{3 - \sqrt{3}}} \, | 1 \rangle \enspace ,
  \end{split}
\end{equation}
and the entanglement is $E_{\text{G}} ( | \psi_{10} \rangle ) = \log_2
(32/5) \approx 2.6781$.  This is less than $0.1 \%$ difference from $|
\Psi_{10} \rangle$.

The solution to Thomson's problem, recast as a quantum state of the
form $| \Psi_{10} ' \rangle = \alpha | S_{1} \rangle + \beta | S_{5}
\rangle - \alpha | S_{9} \rangle$, is not positive and has an
entanglement of $E_{\text{G}} \approx 2.7316$.  From numerics one can
see that the entanglement of this state can be further increased by
slightly modifying the coefficients, arriving at a state with eight
CPPs and an entanglement of $E_{\text{G}} ( | \Psi_{10} ' \rangle )
\approx 2.7374$.  The state is shown in \fig{bloch_10}(c), and we
propose it as a candidate for the maximally entangled symmetric state
of ten qubits.

\subsection{Eleven qubits}\label{majorana_eleven}

The solution to T\'{o}th's problem is a pentagonal antiprism with a
pentagonal pyramid on one of the two pentagonal surfaces, or
(1-5-5). The solution to Thomson's problem is of the form (1-2-4-2-2).
Analogous to the ten-qubit case, the numerically found positive state
of 11 qubits with maximal entanglement is not rotationally
symmetric. The state, shown in \fig{bloch_11}(a), is of the form $|
\Psi_{11} \rangle = \alpha | S_{1} \rangle + \beta | S_{5} \rangle +
\gamma | S_{10} \rangle$, and its entanglement is $E_{\text{G}} ( |
\Psi_{11} \rangle ) \approx 2.77$.  The state has only two CPPs, but
there exist seven more local maxima with values close to the CPPs.

\begin{figure}[hb]
  \begin{center}
    \begin{overpic}[scale=.5]{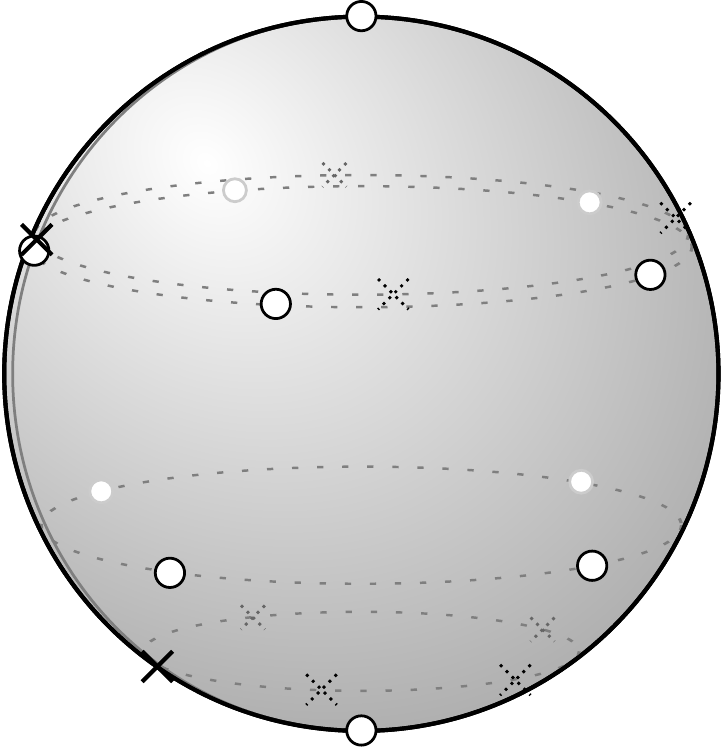}
      \put(-7.5,0){(a)}
    \end{overpic}
    \hspace{5mm}
    \begin{overpic}[scale=.5]{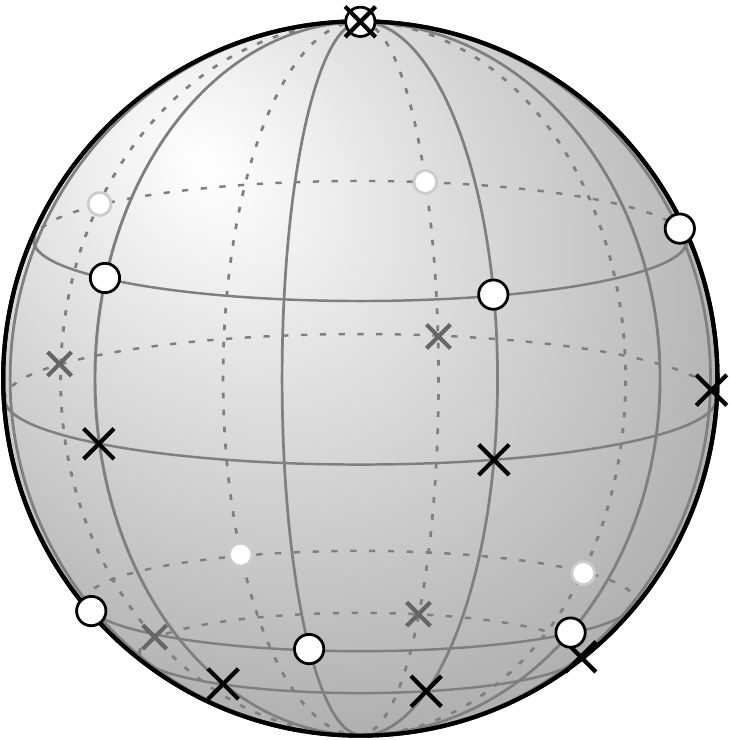}
      \put(-7.5,0){(b)}
    \end{overpic}
  \end{center}
  \caption{\label{bloch_11} The conjectured maximally entangled
    positive state of 11 qubits is shown in (a), while the candidate
    for the general case is shown in (b).}
\end{figure}

The solution to T\'{o}th's problem, which is of the form $| \psi_{11}
' \rangle = \alpha | S_{0} \rangle + \beta | S_{5} \rangle - \gamma |
S_{10} \rangle$, yields very low entanglement, but by modifying the
coefficients of this non-positive state one can find a state $|
\Psi_{11} ' \rangle$ which is even more entangled than $| \Psi_{11}
\rangle$.  As shown in \fig{bloch_11}(b), the state is rotationally
symmetric around the Z-axis and has 11 CPPs.  The entanglement is
$E_{\text{G}} ( | \Psi_{11} ' \rangle ) \approx 2.83$, making the
state the potentially maximally entangled state of 11 qubits.

\subsection{Twelve qubits}\label{majorana_twelve}

For 12 points, both of the classical problems are solved by the
icosahedron, a Platonic solid.  Because the icosahedron cannot be cast
as a positive state, the numerical search for positive states yields a
different state of the form $| \Psi_{12} ' \rangle = \alpha | S_{1}
\rangle + \beta | S_{6} \rangle + \alpha | S_{11} \rangle$. From
\fig{bloch_12}(a) it can be seen that this state can be thought of as
an icosahedron with one circle of MPs rotated by 36$^\circ$ so that it
is aligned with the MPs of the other circle.  There are three circles
of CPPs with five in each circle. One of these planes coincides with
the equator, so that $\vert \sigma \rangle = 1 / \sqrt{2}
(\vert0\rangle + \vert1\rangle)$ is a trivial CPP. Nevertheless, the
exact location of some of the MPs and CPPs are unknown. The
approximate entanglement is $E_{\text{G}} ( | \Psi_{12} ' \rangle )
\approx 2.99$.

\begin{figure}[ht]
  \begin{center}
    \begin{overpic}[scale=.5]{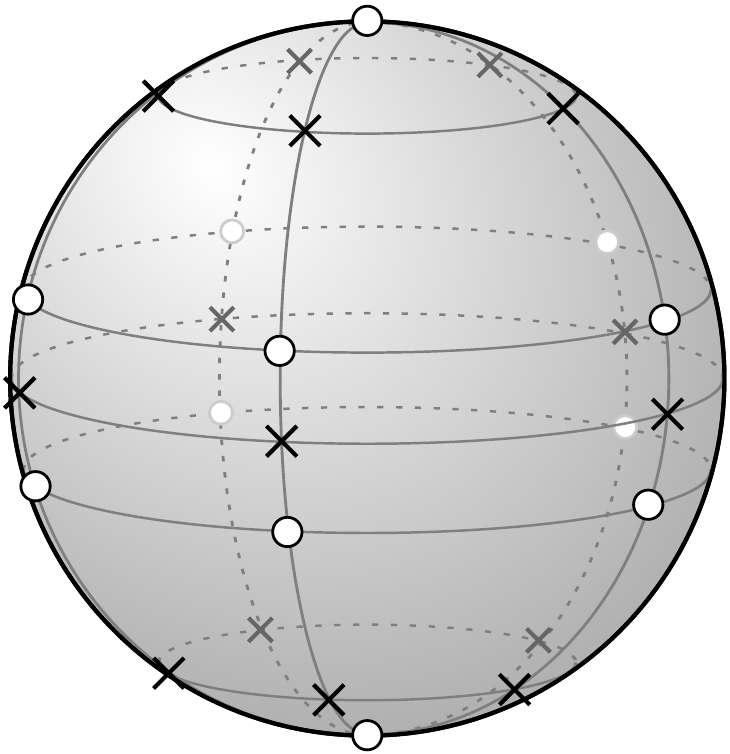}
      \put(-7.5,0){(a)}
    \end{overpic}
    \hspace{5mm}
    \begin{overpic}[scale=.5]{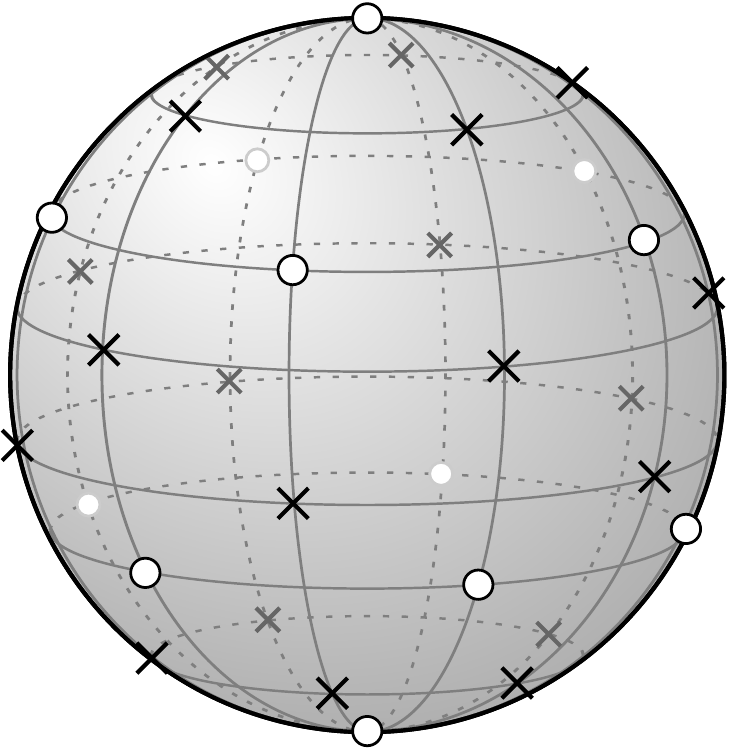}
      \put(-7.5,0){(b)}
    \end{overpic}
  \end{center}
  \caption{\label{bloch_12} An orientation of the maximally entangled
    positive symmetric state of 12 qubits is shown in (a).  The
    icosahedron state, shown in (b), is considered to be the maximally
    entangled of all symmetric 12 qubit states.}
\end{figure}

Due to the high symmetry present in Platonic solids, the `icosahedron
state' is a strong candidate for maximal symmetric entanglement of
twelve qubit states.  The state can be cast with real coefficients $|
\Psi_{12} \rangle = \tfra{\sqrt{7}}{5} | S_{1} \rangle -
\tfra{\sqrt{11}}{5} | S_{6} \rangle - \tfra{\sqrt{7}}{5} | S_{11}
\rangle$, and its MPs can be easily derived from the known angles and
distances in the icosahedron.
\begin{equation}\label{12_maj}
  \begin{split}
    & | \phi_{1} \rangle = | 0 \rangle \, , \quad
    | \phi_{12} \rangle = | 1 \rangle \enspace , \\
    & | \phi_{2,3,4,5,6} \rangle = \sqrt{ \tfra{3+\sqrt{5}}{5
        +\sqrt{5}}} \, | 0 \rangle + \E^{\I \kappa}
    \sqrt{\tfra{2}{5+\sqrt{5}}} \, | 1 \rangle \enspace , \\
    & | \phi_{7,8,9,10,11} \rangle = \sqrt{ \tfra{2}{5+\sqrt{5}}}
    \, | 0 \rangle + \E^{\I (\kappa + \pi / 5 )} \sqrt{
      \tfra{3+\sqrt{5}}{5 +\sqrt{5}}} \, | 1 \rangle \enspace ,
  \end{split}
\end{equation}
with $\kappa = 0, \tfra{2 \pi}{5}, \tfra{4 \pi}{5}, \tfra{6 \pi}{5},
\tfra{8 \pi}{5}$. From numerics and from the Z-axis rotational
symmetry, it is evident that there are 20 CPPs, one at the center of
each face of the icosahedron. Equivalent to the six-qubit case, the
MPs appear as diametrically opposite pairs, forcing the CPPs to be as
remote from the MPs as possible.  The CPPs are
\begin{equation}\label{12_sigma}
  \begin{split}
    | \sigma_{1,\ldots ,5}  \rangle & = \aaa_{+} | 0 \rangle +
    \E^{\I (\kappa + \pi / 5 )} \, \aaa_{-} | 1
    \rangle \enspace , \\
    | \sigma_{6,\ldots ,10} \rangle & = \bbb_{+} | 0 \rangle +
    \E^{\I (\kappa + \pi / 5 )} \, \bbb_{-} | 1
    \rangle \enspace , \\
    | \sigma_{11,\ldots,15} \rangle & = \bbb_{-} | 0 \rangle +
    \E^{\I \kappa} \, \bbb_{+} | 1 \rangle \enspace , \\
    | \sigma_{16,\ldots,20} \rangle & = \aaa_{-} | 0 \rangle +
    \E^{\I \kappa} \, \aaa_{+} | 1 \rangle \enspace ,
  \end{split}
\end{equation}
with $\kappa = 0, \tfra{2 \pi}{5}, \tfra{4 \pi}{5}, \tfra{6
  \pi}{5}, \tfra{8 \pi}{5}$ and
\begin{equation}\label{12_sigma_coeff}
  \begin{split}
    \aaa_{\pm} & = \sqrt{\frac{1}{2} \pm \frac{1}{2}
      \sqrt{\frac{5+2 \sqrt{5}}{15}}} \enspace , \\
    \bbb_{\pm} & = \sqrt{\frac{1}{2} \pm \frac{1}{2}
      \sqrt{\frac{5-2 \sqrt{5}}{15}}} \enspace .
  \end{split}
\end{equation}
With the knowledge of the exact positions of the MPs and CPPs, the
entanglement of the icosahedron state can be calculated as
$E_{\text{G}} ( | \Psi_{12} \rangle ) = \log_2 (243/28) \approx
3.1175$.  \Fig{icosahedronpic} shows a spherical plot of the overlap
function $g( \sigma ) = | \langle \Psi_{12} | \sigma \rangle^{\otimes
  12} |$ from the same viewpoint as in \fig{bloch_12}(b).  Due to
their diametrically opposite pairs, the MPs coincide with the zeros in
this plot. The CPPs can be identified as the maxima of $g( \sigma )$.

\begin{figure}[ht]
  \begin{center}
    \begin{overpic}[scale=.20]{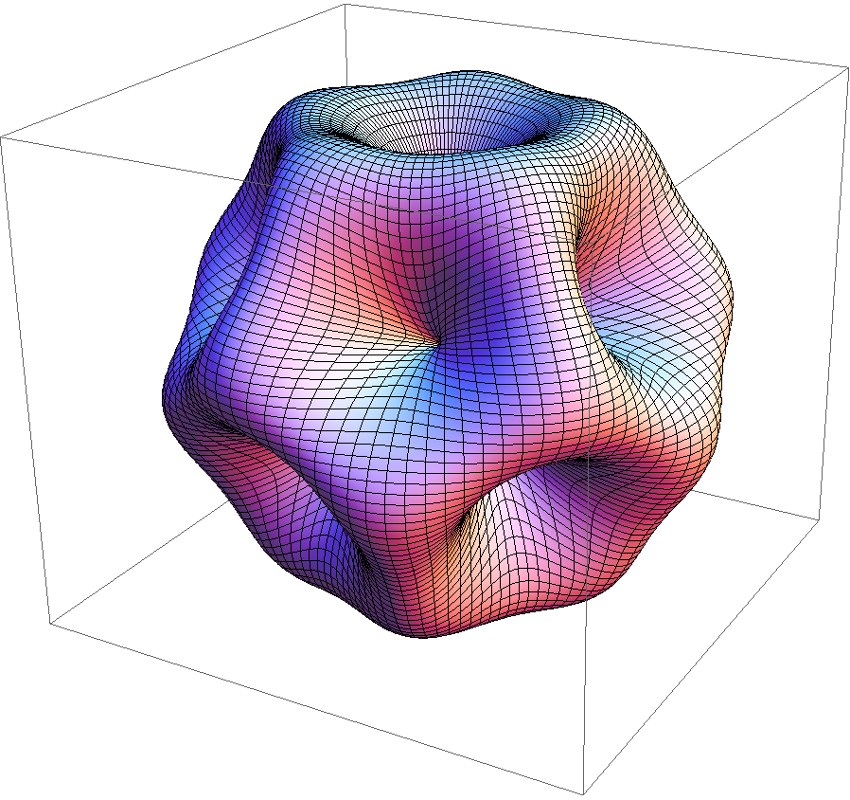}
    \end{overpic}
  \end{center}
  \caption{\label{icosahedronpic} (color online) A spherical plot of
    the overlap function $g( \sigma ) = |\langle \Psi_{12} | \sigma
    \rangle^{\otimes 12}|$ for the icosahedron state $| \Psi_{12}
    \rangle$.}
\end{figure}

\section{Discussion}\label{discussion}

The MP distribution of highly entangled states can be explained with
the overlap function $g(\sigma) = |\langle \psi | \sigma
\rangle^{\otimes n}|$ seen in \fig{icosahedronpic}.
\app{normalization_bloch} states that the integration volume of
$g(\sigma)^2$ over the sphere is the same for all symmetric
states. Therefore a bunching of the MPs in a small area of the sphere
would lead to high values of $g(\sigma)^2$ in that area, thus lowering
the entanglement.  This explains the tendency of MPs to spread out as
far as possible, as it is seen for the classical problems.  However,
there also exist highly entangled states where two or more MPs
coincide (as seen for $n =3,8,9$).  This is intriguing because such
configurations are the least optimal ones for classical
problems. Again, this can be explained with the constant integration
volume of $g(\sigma)^2$. Because of $g(\sigma)^2 \propto \prod_i
|\langle \phi_i | \sigma \rangle |^2$, the zeros of $g(\sigma)^2$ are
the diametrically opposite points (antipodes) of the MPs and therefore
a lower number of \emph{different} MPs leads to a lower number of
zeros in $g(\sigma)^2$. This can lead to the integration volume being
more evenly distributed over the sphere, thus yielding a higher amount
of entanglement.

Excluding the Dicke states with their infinite amount of CPPs, one
observes that highly entangled states tend to have a large number of
CPPs.  The prime example for this is the case of five qubits, where
the classical solution with only three CPPs is less entangled than the
`square pyramid' state that has five CPPs.  In Theorem
\ref{maj_max_pos_zero} it was shown that $2n-4$ is an upper bound on
the number of CPPs of positive symmetric $n$ qubit states.  For all of
our numerically determined maximally entangled states -- including the
non-positive ones -- this bound is obeyed, and for most states the
number of CPPs is close to the bound ($n=5,8$) or coincides with it
($n = 4,6,7,12$).  This raises the question whether this bound also
holds for general symmetric states. To date, neither proof nor
counterexample is known.

When viewing the $n$ MPs of a symmetric state as the edges of a
polyhedron, Euler's formula for convex polyhedra yields the upper
bound $2n-4$ on the number of faces. This bound is strict if no pair
of MPs coincides and all polyhedral faces are triangles.
Intriguingly, this bound is the same as the one for CPPs mentioned in
the previous paragraph, and the polyhedral faces of our numerical
solutions come close to the bound ($n=5,8,11$) or coincide with it
($n=4,6,7,10,12$).  The faces of the polyhedron associated with the
Majorana representation might therefore hold the key to a proof for
$2n-4$ being the upper bound on the number of CPPs for all symmetric
states (with only the Dicke states excluded).

The case of ten qubits seems to be the first one where the maximally
entangled symmetric state cannot be cast as a positive state. For
$n=10,11,12$, our candidates for maximal entanglement are real states,
so the question remains whether the maximally entangled state can
always be cast with real coefficients.  We consider this unlikely,
firstly because of the higher amount of MP freedom in the general case
and secondly because many of the solutions to the classical problems
cannot be cast as real MP distributions for higher $n$.  For Thomson's
problem, the first distribution without any symmetry (and thus no
representation as a real state) arises at $n=13$, and for T\'{o}th's
problem at $n=15$.

Upper and lower bounds on the maximal entanglement of symmetric states
have already been discussed in \sect{positive_and_symm}, with a new
proof for the upper bound given in \app{normalization_bloch}.
Stronger lower bounds can be computed from the known solutions to
T\'{o}th's and Thomson's problems by translating their point
distribution into the corresponding symmetric state and determining
its entanglement.  The diagram in \fig{e_graph} displays the
entanglement of our numerical solutions, together with all bounds.

\begin{figure}
  \begin{center}
    \begin{overpic}[scale=1.1]{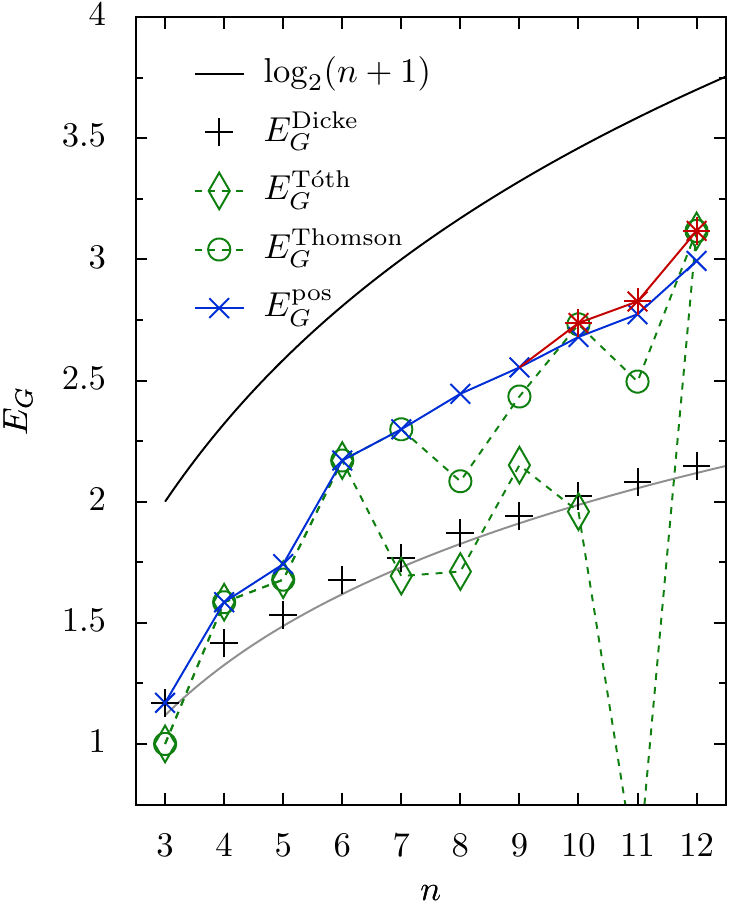}
    \end{overpic}
  \end{center}
  \caption{\label{e_graph} (color online) Scaling of maximal symmetric
    entanglement with the number of qubits $n$.  The upper bound is
    represented by a black line, while the most entangled Dicke states
    form the lower bound.  Their Stirling approximation is displayed
    as a grey line.  The most entangled positive symmetric states
    found are denoted by blue crosses. For $n = 10-12$, the best
    candidates cannot be cast with positive coefficients and are
    depicted as red stars.  The solutions of T\'{o}th's and Thomson's
    problems readily provide lower bounds, displayed as dashed green
    lines.  The solutions of Thomson's problem are generally more
    highly entangled than those of T\'{o}th's problem.}
\end{figure}

For $n > 5$ qubits, the maximally entangled state cannot be symmetric
or turned into one by LOCC, because the lower bound on general states
is higher than the upper bound of symmetric states. For $n=3$, the
maximally entangled state (W state) is demonstrably symmetric, but for
$n = 4,5$ the numerical solutions for symmetric states have less
entanglement than the lower bound of general states. This would imply
that $n$ qubit maximally entangled states can be symmetric if, and
only if, $n \leq 3$.

\section{Conclusion}\label{conclusion}

In this paper, we have investigated the maximally entangled state
among symmetric quantum states of $n$ qubits. By visualizing symmetric
states through the Majorana representation and with the help of
analytical and numerical methods, strong candidates for the maximally
entangled symmetric states of up to 12 qubits were found. A comparison
with the extremal distributions of T\'{o}th's and Thomson's problems
shows that, in some cases, the optimal solution to Majorana's problem
coincides with that of the two classical problems, but in other cases
it significantly differs.

Lower and upper bounds show that the maximal entanglement of
permutation-symmetric qubit states scales between $\Order (\log
\sqrt{n})$ and $\Order (\log n)$ with the number of qubits $n$.  With
respect to MBQC, these results indicate that, although
permutation-symmetric states may not be good resources for
deterministic MBQC, they may be good for stochastic approximate MBQC
\cite{Nest07,Mora10}. It also gives bounds on how much information can
be discriminated locally, for which explicit protocols are known in
some cases (in particular for Dicke states)
\cite{Hayashi06,Hayashi08}.

We remark that, due to the close relationship of distance-like
entanglement measures \cite{Hayashi06}, the results for the geometric
measure give bounds to the robustness of entanglement and the relative
entropy of entanglement, which can be shown to be tight in certain
cases of high symmetry \cite{Hayashi08,Markham10}.

We finally note that a similar study has been carried out, although
from a different perspective, in search of the `least classical' state
of a single spin-$j$ system \cite{Giraud10} (which they call the
`queens of quantumness'). There the Majorana representation is used to
display spin-$j$ states, through the well-known isomorphism between a
single spin-$j$ system and the symmetric state of $n=2j$ spin-$1/2$
systems. In this context, the most `classical' state is the spin
coherent state, which corresponds exactly to a symmetric product state
in our case (i.e. $n$ coinciding MPs). The problem of \cite{Giraud10}
is similar to ours in that they look for the state `furthest away'
from the set of spin coherent states. However, different distance
functions are used, so the optimization problem is subtly different
and again yields different solutions. It is in any case interesting to
note that our results also have interpretations in this context and
vice versa.

\begin{acknowledgments}

  The authors thank S.~Miyashita, S.~Virmani, A.~Soeda and
  K.-H.~Borgwardt for very helpful discussions.  This work was
  supported by the National Research Foundation \& Ministry of
  Education, Singapore, and the project `Quantum Computation: Theory
  and Feasibility' in the framework of the CNRS-JST Strategic
  French-Japanese Cooperative Program on ICT. MM thanks the `Special
  Coordination Funds for Promoting Science and Technology' for
  financial support.

  \emph{Note added.} During the completion of this manuscript, we
  became aware of very similar work that also looks at the maximum
  entanglement of permutation-symmetric states using very similar
  techniques \cite{Martin10}.
\end{acknowledgments}

\appendix

\section{Upper bound on symmetric entanglement}
\label{normalization_bloch}

A symmetric $n$ qubit state can be written as
\begin{equation*}
  | \psi \rangle = \sum_{k = 0}^{n}
  a_k \E^{\I \alpha_k} | S_k \rangle \enspace ,
\end{equation*}
with $a_k \in \mathbb{R}$, $\alpha_k \in [0,2 \pi)$ and the
normalization condition $\sum_{k} a_k^2 = 1$.  Writing the closest
product state as $| \lambda \rangle = | \sigma \rangle^{\otimes n}$
with $| \sigma \rangle = \co_{\theta} | 0 \rangle + \E^{\I \varphi}
\si_{\theta} | 1 \rangle$, we obtain
\begin{equation}\label{calc1}
  \langle \lambda | \psi \rangle = \sum_{k = 0}^{n}
  \E^{\I (\alpha_k - k \varphi )} a_k \co_{\theta}^{n-k}
  \si_{\theta}^{k} \sqrt{{\tbinom{n}{k}}} \enspace .
\end{equation}
Using the set of qubit unit vectors $\mathcal{H}_1$ and the uniform
measure over the unit sphere $d \mathcal{B}$, the squared norm of
\eq{calc1} can be integrated over the Majorana sphere:
\begin{equation}
  \int\limits_{| \sigma \rangle \in \mathcal{H}_1 }
  \vert \langle \lambda | \psi \rangle \vert^2 d \mathcal{B} =
  \int\limits_{0}^{2 \pi} \int\limits_{0}^{\pi}
  \vert \langle \lambda | \psi \rangle \vert^2 \sin \theta \,
  \D \theta \D \varphi \enspace .
\end{equation}
Taking into account that $\int_{0}^{2 \pi} \E^{\I m \varphi} \D
\varphi = 0$ for any integer $m \neq 0$, one obtains
\begin{subequations}\label{calc_mean}
  \begin{align}
    & \int\limits_{0}^{2 \pi} \int\limits_{0}^{\pi}
    \left[ \: \sum_{k = 0}^{n} a_k^2 \co_{\theta}^{2(n-k)}
      \si_{\theta}^{2k} {\binom{n}{k}} \right]
    \sin \theta \, \D \theta \D \varphi \enspace ,
    \label{calc_mean_1} \\
    & = 2 \pi \sum_{k = 0}^{n}
    a_k^2 {\binom{n}{k}} \int\limits_{0}^{\pi}
    \co_{\theta}^{2(n-k)}
    \si_{\theta}^{2k}
    \sin \theta \, \D \theta \enspace ,
    \label{calc_mean_2} \\
    & = 4 \pi \sum_{k = 0}^{n} a_k^2 {\binom{n}{k}}
    \frac{\Gamma (k+1) \Gamma (n-k+1)}{\Gamma (n+2)} \enspace ,
    \label{calc_mean_4} \\
    & = 4 \pi \sum_{k = 0}^{n} a_k^2 \frac{1}{n+1} =
    \frac{4 \pi}{n+1} \enspace .
    \label{calc_mean_5}
  \end{align}
\end{subequations}
The equivalence of Equations \eqsimple{calc_mean_2} and
\eqsimple{calc_mean_4} follows from the different definitions of the
Beta function \cite{Abramowitz}.  Since the mean value of $|\langle
\lambda | \psi \rangle|^2$ over the Majorana sphere is $4 \pi /
(n+1)$, it follows that $\text{G} ( | \psi \rangle )^2 \geq 1/(n+1)$,
or $E_{\text{G}} ( | \psi \rangle ) \leq \log_2 (n+1)$, for any
symmetric $n$ qubit state.

This result was first shown by R.~Renner in his PhD thesis
\cite{Renner}, using a similar proof that employs an explicit
separable decomposition of the identity over symmetric subspace.  The
same proof as ours was independently found by J. Martin \etal
\cite{Martin10}.

\section{Proof of theorem \ref{maj_max_pos_zero}}
\label{max_cpp_number}

Class (b): We consider states that have a Z-axis rotational symmetry
with minimal rotational angle $\varphi = 2 \pi / m$, $m \in
\mathbb{N}$ and $1<m \leq n$.  \Fig{symm_example} shows an example for
$m=5$.  Due to the rotational Z-axis symmetry and the reflective
symmetry imposed by Theorem \ref{maj_real}, the MPs are restricted to
specific distribution patters.  An arbitrary number of MPs can lie on
each of the poles, with the remaining MPs equidistantly aligned along
horizontal circles.  The figure shows the two principal types of
horizontal circles that can exist. The upper one is the basic type for
positive states of five qubits, and the lower one a special case where
two basic circles are intertwined at azimuthal angle $\pm \vartheta$
from the position of the single basic circle, respectively. All
conceivable horizontal circles of MPs can be decomposed into these two
principal types.

\begin{figure}
  \begin{center}
    \begin{overpic}[scale=.5]{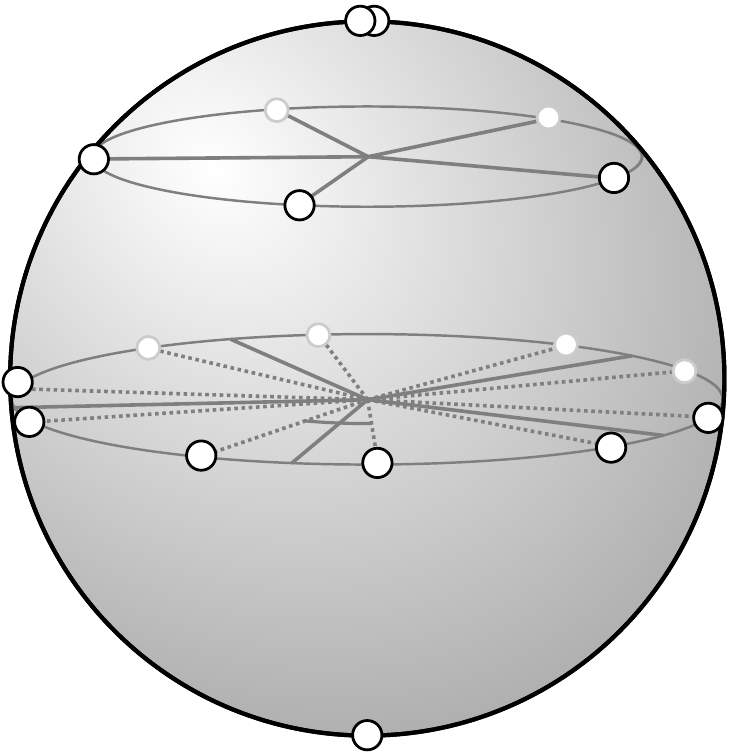}
      \put(36.3,39.3){{\scriptsize $\vartheta$}}
      \put(43.8,39.0){{\scriptsize $\vartheta$}}
    \end{overpic}
  \end{center}
  \caption{\label{symm_example} An example of a positive state of 18
    qubits with a rotational symmetry around the Z-axis with angle
    $\varphi = 2 \pi / 5$. Two MPs lie on the north pole, one on the
    south pole, five on a single basic circle and ten on two
    intertwined basic circles.}
\end{figure}

According to \eq{bloch_product}, any CPP $\vert \sigma \rangle$
maximizes the function $\prod_{i=1}^{n} \, | \langle \sigma | \phi_i
\rangle |$, where the $\vert \phi_i \rangle$ are the MPs.  From Lemma
\ref{lem_positive} it follows that there must be at least one positive
$\vert \sigma \rangle = \co_{\theta} |0\rangle + \si_{\theta}
|1\rangle$. We first derive the maximum number of positive CPPs, and
from these the upper bound for the total number of CPPs can be
immediately obtained with the help of Lemma \ref{cpp_mer}.

For a MP distribution with $k$ MPs on the north pole, $l$ MPs on the
south pole and the remaining $n-k-l$ MPs on horizontal circles, the
function to maximize is
\begin{equation*}
  f(\theta) = \langle \sigma | 0 \rangle^{k} \langle \sigma
  | 1 \rangle^{l} \prod_{r} h_1 (\theta_{r}) \prod_{s} h_2
  (\vartheta_{s} , \theta_{s}) \enspace ,
\end{equation*}
where $h_1 (\theta_{r})$ represents the factors contributed by a
single basic circle with $m$ MPs at inclination $\theta_{r}$, and $h_2
(\vartheta_{s}, \theta_{s})$ represents the factors contributed by two
basic circles with $2m$ MPs intertwined at azimuthal angles $\pm
\vartheta_{s}$, and inclination $\theta_{s}$. It is easy to verify
that
\begin{align*}
  h_1 (\theta_{r}) & = \co_{\theta}^{m} \co_{\theta_r}^{m} +
  \si_{\theta}^{m} \si_{\theta_r}^{m} \enspace , \\
  h_2 (\vartheta_{s} , \theta_{s}) & =
  \co_{\theta}^{2m} \co_{\theta_s}^{2m} + 2 \cos
  (m \vartheta_{s}) \co_{\theta}^{m} \si_{\theta}^{m}
  \co_{\theta_s}^{m} \si_{\theta_s}^{m} + \si_{\theta}^{2m}
  \si_{\theta_s}^{2m} \: .
\end{align*}
From this it is clear that $f$ can be written in the form
\begin{equation*}
  f(\theta) = \co_{\theta}^{k} \si_{\theta}^{l} \sum\limits_{i=0}^{p}
  a_i  \co_{\theta}^{(p-i)m} \si_{\theta}^{im} = \sum\limits_{i=0}^{p}
  a_i  \co_{\theta}^{k+(p-i)m} \si_{\theta}^{l+im} \enspace ,
\end{equation*}
where the $a_i$ are positive-valued coefficients, and $p$ is the
number of basic circles ($k+l+pm=n$).  The number of zeros of
$f'(\theta)$ in $\theta \in (0, \pi)$ gives a bound on the number of
positive CPPs. The form of $f'(\theta)$ is qualitatively different for
$m=2$ and $m>2$. With the substitution $x = \tan (\theta/2)$ the
equation $f'(\theta) = 0$ for $m=2$ becomes
\begin{gather*}
  a_{0} l + \left( \sum\limits_{i=1}^{p} b_i x^{2i} \right) -
  a_{p} k x^{2p+2} = 0 \enspace , \quad \text{with} \\
  b_{i} = a_{i} (l+2i) - a_{i-1} (k+2(p-i)+2)
  \in \mathbb{R} \enspace .
\end{gather*}
This is a real polynomial in $x$, with the first and last coefficient
vanishing if no MPs exist on the south pole ($l=0$) and north pole
($k=0$), respectively.  Descartes' rule of signs states that the
number of positive roots of a real polynomial is at most the number of
sign differences between consecutive nonzero coefficients, ordered by
descending variable exponent. From this and the fact that the codomain
of $x$ is $\mathbb{R}^{+}$, we obtain the result that for $m=2$ there
are at most $p-1$, $p$ or $p+1$ extrema of $f(\theta)$ lying in
$\theta \in (0,\pi)$, depending on whether $k$ and $l$ are zero or
not.

For $m>2$, we obtain the analogous result
\begin{align*}
  a_{0} l + {} & \Bigg( \sum\limits_{i=1}^{p} - c_i x^{im-(m-2)} +
  d_i x^{im} \Bigg) - a_{p} k x^{pm+2} = 0 \enspace , \\
  & \text{with} \quad c_{i} = a_{i-1} (k+(p-i)m+m)
  \in \mathbb{R}^{+} \enspace , \\
  & \text{and} \quad \: d_{i} = a_{i} (l+im) \in \mathbb{R}^{+}
  \enspace .
\end{align*}
From Descartes' rule of signs, we find that there exist $2p-1$, $2p$
or $2p+1$ extrema of $f(\theta)$ in $\theta \in (0, \pi)$, depending
on whether $k$ and $l$ are zero or not.

With these results it is easy to determine the maximum number of
global maxima of $f(\theta)$, which are the positive CPPs. Case
differentiations have to be done with regard to $m=2$ or $m>2$,
whether $k$ and $l$ are zero or not and whether $p$ is even or odd.
Due to the rotational Z-axis symmetry, the non-positive CPPs can be
immediately obtained. For any positive CPP not lying on a pole, there
are $m-1$ other, non-positive CPPs lying at the same inclination
(cf. Lemma \ref{cpp_mer}).  For $m=2$, the maximum possible number of
CPPs is $(n/2) + 1$ ($n$ even) or $(n+1)/2$ ($n$ odd). This is
significantly less than that in the general case $m>2$, where the
maximum number of CPPs is $2n-4$. Interestingly, this bound decreases
to $n$ if at least one of the two poles is free of MPs.

Class (c): All MPs of a positive state must either lie on the positive
meridian or form complex conjugate pairs (cf. Lemma
\ref{maj_real}). From this the optimization function is easily derived
as
\begin{equation*}\label{ffform}
  f(\theta) = \sum\limits_{i=0}^{n} a_i  \co_{\theta}^{n-i}
  \si_{\theta}^{i} \enspace ,
\end{equation*}
with real $a_i$. Calculating $f'(\theta)$ yields the condition for the
extrema:
\begin{align*}
  a_{1} + {} & \left( \sum\limits_{i=1}^{n-1} b_i x^{i} \right)
  - a_{n-1} x^{n} = 0 \enspace , \\
  & \text{with} \quad b_{i} = a_{i+1} (i+1) - a_{i-1} (n-i+1)
  \enspace .
\end{align*}
From this, the maximum number of CPPs can be derived with Descartes'
rule. All CPPs are now restricted to the positive meridian and the
poles, yielding at most $(n+3)/2$ CPPs for odd $n$ and $(n+2)/2$ for
even $n$.  \hfill $\square$

\newpage

\section{Table of entanglement values}\label{entanglement_table}

\begin{table}[h!]
  \caption{\label{enttable}
    The table lists the known ($n= 2,3$) and numerically
    determined ($n > 3$) values of the maximal entanglement of symmetric
    $n$ qubit states. The left column lists the extremal entanglement
    among positive symmetric states, and, where more entangled non-positive
    symmetric states are known, they are displayed in the right column.}
  \begin{tabular}{|c||c|c|} \hline
    n & $E_{\text{G}}^{\text{pos}}$ & $E_{\text{G}}$ \\ \hline \hline
    2 & $1$ &  \\
    3 & $2 \log_2 3 - 2 \approx 1.170$ &  \\
    4 & $\log_2 3 \approx 1.585$ &  \\
    5 & $\approx 1.742 \: 268 \: 948$ \footnotemark[1] &  \\
    6 & $2 \log_2 3 - 1 \approx 2.170$ &  \\
    7 & $\approx 2.298 \: 691 \: 396$ \footnotemark[1] &  \\
    8 & $\approx 2.445 \: 210 \: 159$ \phantom{$^{a}$} &  \\
    9 & $\approx 2.553 \: 960 \: 277$ \footnotemark[1]  &  \\
    10 & $\approx 2.679 \: 763 \: 092$ \phantom{$^{a}$} &
    $\approx 2.737 \: 432 \: 003$ \\
    11 & $\approx 2.773 \: 622 \: 669$ \phantom{$^{a}$} &
    $\approx 2.817 \: 698 \: 505$ \\
    12 & $\approx 2.993 \: 524 \: 700$ \phantom{$^{a}$} &
    $\log_2 (243/28) \approx 3.117$ \\ \hline
  \end{tabular}
  \footnotetext[1]{The analytic form is known, but is of a complicated form.}
\end{table}

\end{document}